\definecolor{deeppink}{rgb}{1.0, 0.08, 0.58}
\newtheorem{theorem}{Theorem}  
\begin{document}
\title{Self-Supervised-ISAR-Net Enables Fast Sparse ISAR Imaging}
\author{Ziwen~Wang}
\member{Graduate student Member,~IEEE}

\author{Jianping~Wang}
\member{Member,~IEEE}

\author{Pucheng~Li}
\member{Member,~IEEE}

\author{Yifan~Wu}
\member{Graduate student Member,~IEEE}

\author{Zegang~Ding}
\member{Senior Member,~IEEE}
\affil{Beijing Institute of Technology}

\receiveddate{Manuscript received XXXXX 00, 0000.\\
This work was supported in part by the National Science Foundation of China under Grant 62227901 and Grant 62101035, the Key Program of the National Science Foundation of China under Grant 61931002. }
\corresp{ {\itshape (Corresponding author: Jianping Wang)}.}

\authoraddress{Ziwen Wang, Jianping Wang, Pucheng Li, Yifan Wu, and Zegang Ding are with the School of Information and Electronics, Beijing Institute of Technology, and also with the Key Laboratory of Electronic and Information Technology in Satellite Navigation, Ministry of Education, Beijing 100081, China.
(e-mail: \href{mailto:3120235592@bit.edu.cn}{3120235592@bit.edu.cn}; \href{mailto:jianpingwang@bit.edu.cn}{jianpingwang@bit.edu.cn}; \href{mailto:pucklee1111@163.com}{pucklee1111@163.com}; 
\href{mailto:1540830052@qq.com}{1540830052@qq.com};
\href{mailto:z.ding@bit.edu.cn}{z.ding@bit.edu.cn}).}

\markboth{Journal of \LaTeX\ Class Files,~Vol.~14, No.~8, August~2021}%
{Shell \MakeLowercase{\textit{et al.}}: A Sample Article Using IEEEtran.cls for IEEE Journals}


\maketitle

\begin{abstract}
Numerous sparse inverse synthetic aperture radar (ISAR) imaging methods based on unfolded neural networks have been developed for high-quality image reconstruction with sparse measurements. However, their training typically requires paired ISAR images and echoes, which are often difficult to obtain.
Meanwhile, one property can be observed that for a certain sparse measurement configuration of ISAR, when a target is rotated around its center of mass, only the image of the target undergoes the corresponding rotation after ISAR imaging, while the grating lobes do not follow this rotation and are solely determined by the sparse-sampling pattern. This property is mathematically termed as the equivariant property.
Taking advantage of this property, an unfolded neural network for sparse ISAR imaging with self-supervised learning, named SS-ISAR-Net is proposed.
It effectively mitigates grating lobes caused by sparse radar echo, allowing high-quality training to be achieved using only sparse radar echo data. The superiority of the proposed SS-ISAR-Net, compared to existing methods, is verified through experiments with both synthetic and real-world measurement data.
\end{abstract}

\begin{IEEEkeywords}
  Sparse ISAR imaging, Unfolded network, Equivariant constraint, Sparse radar echo
\end{IEEEkeywords}

\section{Introduction}
\IEEEPARstart{I}{nverse} synthetic aperture radar (ISAR) plays a crucial role in space situation awareness and target monitoring due to its all-day, all-weather imaging capability~\cite{chan2008introduction,samczynski2016sar}. It obtains high range resolution by employing wideband signals and high azimuth resolution by exploiting the relative motion between the radar and the target. The rotational component of the relative motion serves as the primary source of azimuth resolution, whereas the translational component does not contribute to imaging quality and must be compensated. 
Following translational motion compensation~\cite{li2023joint, yang2020integration, zhang2024joint}, a widely used and effective signal processing approach for ISAR imaging is the range-Doppler (RD) method~\cite{prickett1980principles}. In practical ISAR imaging scenarios, interferences, as well as radar system resource scheduling, may hinder the acquisition of complete echoes in both the azimuth and range dimensions according to the Nyquist sampling theorem~\cite{5783920,10634748}. The process of reconstructing an ISAR image from incomplete radar echo data is referred to as sparse ISAR imaging.
However, for sparse ISAR imaging, RD is not applicable as it would lead to the emergence of grating lobes that cause ambiguities. 

To overcome this problem, many sparse ISAR imaging methods based on compressed sensing (CS) and deep learning (DL) techniques are developed~\cite{donoho2006compressed,8550778}. They are discussed in detail in the following.
      


\textbf{Compressed sensing (CS)-based ISAR imaging methods}: 
In sparse ISAR imaging, CS-based optimization methods are primarily categorized into three types: greedy algorithms, sparse Bayesian algorithms, and convex optimization algorithms based on sparsity constraint. 

The basic principle of greedy algorithms is to iteratively select the locally optimal matching element, compute the signal residual, and continue the process until the reconstruction error falls below a predefined threshold ~\cite{mallat1993matching,pati1993orthogonal}.
Greedy algorithm-based methods offer a simple and intuitive strategy for ISAR imaging. 
However, their imaging accuracy deteriorates significantly when the sparsity of targets is unknown — a condition that is typically not available in practical scenarios~\cite{cheng2019fast}.

Sparse Bayesian Learning (SBL) algorithms assume that each element of the reconstructed signal follows a predefined sparse prior distribution. Using Bayes' theorem, these algorithms maximize the posterior probability to obtain an optimal estimation~\cite{wipf2004sparse}. Due to their probabilistic framework, SBL algorithms exhibit strong modeling capabilities and facilitate the acquisition of sparse solutions more effectively. 
However, SBL methods require matrix inversion, which results in high computational complexity. Although some efforts have been made to address this issue~\cite{zhang2024joint}, SBL algorithms still suffer from hyperparameter sensitivity and sparse model mismatch. 


Convex optimization-based methods address the (relaxed) convex inversion problem of the sparse ISAR imaging through iterative solvers, including the fast iterative shrinkage threshold algorithm (FISTA)~\cite{beck2009fast}, and the alternating direction method of multipliers (ADMM)~\cite{boyd2011distributed, wang20223}. Among these methods, ADMM has attracted significant attention in sparse ISAR imaging thanks to its ability to tackle multivariate optimization problems~\cite{9130034, zhang2020computationally}. 
However, the selection of related hyperparameters in ADMM-based ISAR imaging methods is heavily based on empirical tuning, which generally is not trivial and could significantly affect imaging accuracy under certain circumstances. Moreover, since ISAR targets often exhibit continuous structural characteristics, enforcing a regular sparse prior may lead to reconstruction mismatches. 
To address these challenges, several unfolded neural networks based on convex optimization methods have been introduced as a promising alternative~\cite{10038885,li2022high}. 

\textbf{Deep learning (DL)-based ISAR imaging methods}: 
DL-based ISAR imaging methods include purely data-driven DL methods and model-guided DL methods. 
For the purely data-driven methods, an end-to-end deep neural network is trained directly by taking radar echo data or pre-processed images as input and the corresponding ISAR images as output~\cite{wang2023deep}. Although good results can be obtained, these methods generally suffer from a lack of interpretability and poor generalization performance. 
      
To address these limitations, model-guided DL methods for ISAR imaging are developed by unfolding the optimization algorithms mentioned above into an interpretable iterative neural network, which are computationally efficient and achieve higher imaging accuracy compared to conventional optimization methods~\cite{wang2022efficient, huang2025cv, pu2022sae}. Meanwhile, an advanced technique known as SwinRL-ADMM integrates reinforcement learning with the Swin Transformer to adaptively learn iterative hyperparameters, with the aim of achieving optimal performance~\cite{li2025tuning}. Despite these advancements, training the aforementioned end-to-end neural networks requires paired radar echoes and corresponding images. However, in the field of ISAR imaging, acquiring a sufficiently large training dataset remains a significant challenge.

Therefore, developing a self-supervised training strategy for DL-based ISAR imaging methods is crucial by exploiting target feature prior information from sparse radar echo data, which usually requires finding some kind of inherent constraint in the ISAR imaging process \cite{gui2024survey}. In practice, an equivariance property can be observed (see Fig.~\ref{fig:EC_model}): when a target of ISAR undergoes a geometric transformation (e.g., a rotation around its center of mass ${{\bf{T}}_g}$), its true scattering features rotate correspondingly in the reconstructed ISAR image. In contrast, grating lobes introduced by the sparse imaging operator ${{\bf{F}}_s}$ do not follow this transformation, since they are caused by aliasing effects introduced by the imaging operator. This indicates that only the target features exhibit equivariance whereas the grating lobes do not. This property provides the mathematical basis for grating lobes suppression via the equivariant constraint.

Building on the equivariant constraint property, in this paper, an unfolded neural network for sparse ISAR imaging with self-supervised learning is proposed, referred to as SS-ISAR-Net, which allows the ISAR imaging network to be trained using only incomplete radar echo data. The proposed method achieves comparable or even better sparse ISAR imaging performance in contrast to state-of-the-art supervised DL methods.

The main contributions of this paper are as follows:
\begin{enumerate}
  \item An ADMM-based unfolded ISAR imaging network with a local feature-adaptive threshold mechanism is proposed, enabling the learning of hidden local structural features from data.
  \item A self-supervised learning framework with an equivariant constraint is proposed and demonstrated that imposing the equivariant constraint is mathematically equivalent to extending the measurement dimension of the ISAR sensing matrix.
  \item A Noise2Noise (N2N)-based echo self-denoising module is introduced to mitigate the impact of noise on the proposed SS-ISAR-Net.
  \item A series of experiments were conducted to validate the effectiveness of the proposed SS-ISAR-Net.
  \end{enumerate}


\begin{figure}[t]
  \centering
  \includegraphics[width =7.6cm, height =7cm]{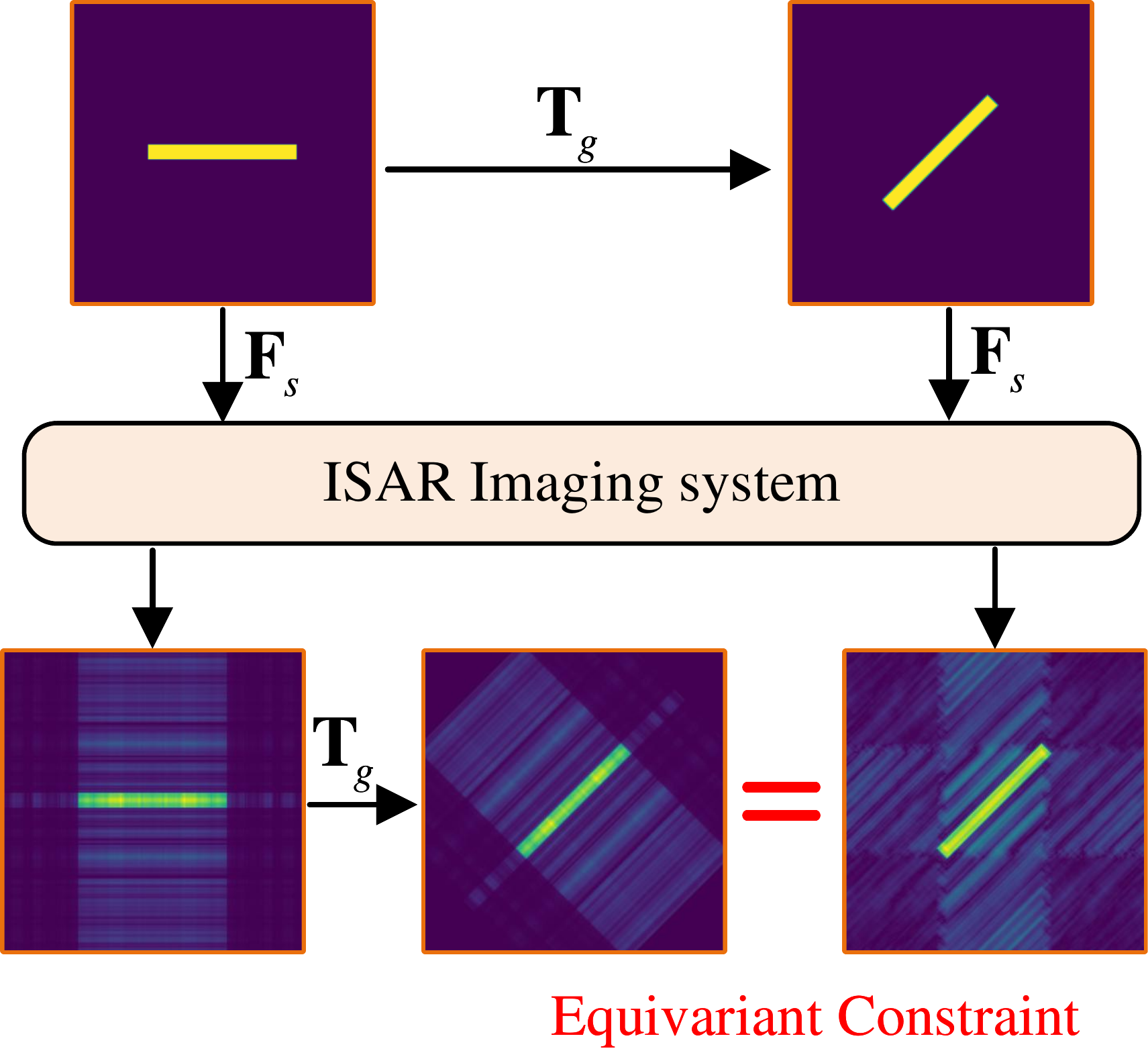}
  \caption{Illustration of the equivariant constraint in ISAR imaging}\label{fig:EC_model}
\end{figure}

The rest of the paper is organized as follows: Section~\ref{Sec:2} presents the ISAR signal model and a sparse ISAR imaging solver based on ADMM. In Section~\ref{Sec:3}, an ADMM unfolded ISAR imaging network is proposed, achieving an end-to-end self-supervised training by taking advantage of an equivariant constraint. In Section~\ref{Sec:4}, the results of numerical simulations and real-data experiments are presented to illustrate the effectiveness of the proposed SS-ISAR-Net. In Section~\ref{Sec:5}, the impact of some  hyperparameters of the network on the imaging performance is discussed. Finally, conclusions and further plans are given in Section~\ref{Sec:6}.

\section{ISAR Signal Model and Optimization Solver}\label{Sec:2}
In this section, a general ISAR imaging model is introduced first, and then an ADMM-based framework for sparse ISAR imaging is provided. 
        
\subsection{ISAR imaging model}\label{Sec:2-1}

\begin{figure}[!t]
  \hspace{-0.2cm}
  \includegraphics[width =7cm, height =6.3cm]{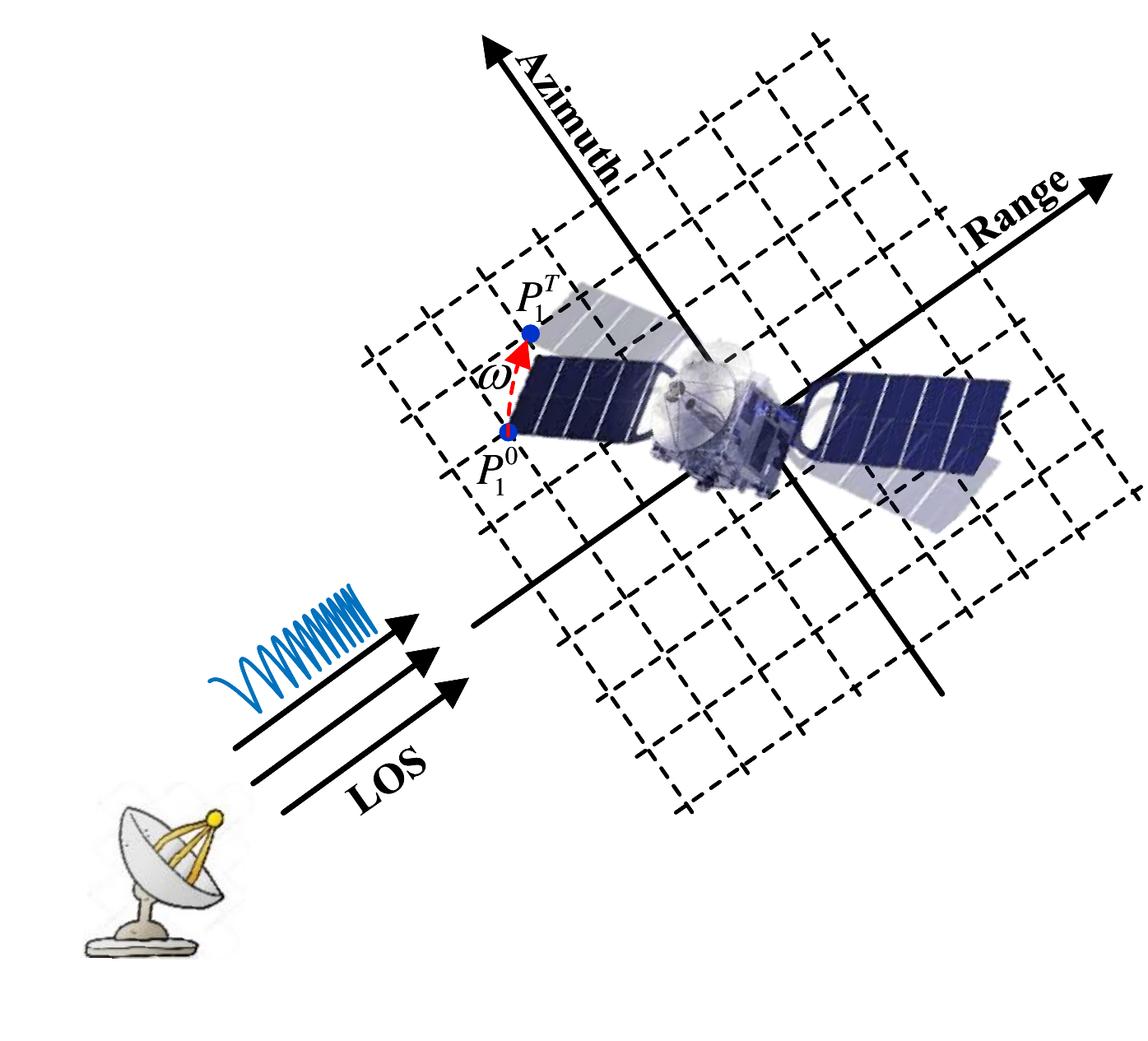}
  \vspace{-0.5cm}
  \caption{Illustration of the ISAR imaging model}\label{fig:ISAR}
\end{figure}

A general ISAR imaging model is shown in Fig.~\ref{fig:ISAR}. ISAR obtains high-resolution images of targets in both range and azimuth dimensions by transmitting wideband signals and utilizing the target's relative azimuthal rotation with respect to the radar, respectively. Its acquired echoes of targets after demodulation and range-matched filtering can be expressed as
\begin{align}\label{eq:echo1}
  s({f_r},{t_m}) = \sum\limits_{k = 1}^K &{{A_k}} {\text{rect}}\left( {\frac{{{f_r}}}{B}} \right) \notag \\
   &\cdot \exp \left( { - j\frac{{4\pi }}{c}\left( {{f_c} + {f_r}} \right) {R_k}\left( {{t_m}} \right)} \right) 
\end{align} 
where $f_r$ denotes the frequency counterpart of the fast time, and $f_c$ represents the center frequency. $t_m$ is the slow time in azimuth. The variable $k = [1,\,2,\, \cdots,\, K]$ represents the index of scattering points, while $A_k$ and {$ R_k\left( t_m \right)$} are the scattering amplitude and instantaneous slant range at $t_m$ of the $k^\text{th}$ point scatterer, respectively.
Considering the assumption of small-angle rotation of the target, 
the instantaneous slant range after compensating the translational component~\cite{ding2020parametric} can be expressed as
\begin{equation}
   {R_k}\left( {{t_m}} \right) \approx {x_k} + {y_k}\omega {t_m}
  \label{eq:rotation}
\end{equation}
where $x_k$ and $y_k$ denote range and azimuth coordinates of the $k^\text{th}$ target point, respectively, while $\omega$ represents the relative angular rotation speed. By combining~\eqref{eq:echo1} and~\eqref{eq:rotation}, neglecting the constant phase term and compensating for migration through range cells~\cite{9130034,xing2004migration}, the resultant signal can be given by
\begin{align}
s\left( {{f_r},{t_m}} \right) = \sum\limits_{k = 1}^K {{A_k}} &{\rm{rect}}\left( {\frac{{{f_r}}}{B}} \right)\cdot\exp \left( { - j\frac{{4\pi }}{c}\left( {{f_c} + {f_r}} \right){x_k}} \right)\notag\\
 &\qquad\cdot \exp \left( { - j\frac{{4\pi {f_c}}}{c}{y_k}\omega {t_m}} \right)
   \label{eq:echo2}
\end{align} 

Discretizing the frequency in range and slow time in azimuth, we have ${f_r} = n\Delta f$, ${t_m} =m/\mathrm{PRF}$, where $n \in \left[ {1,N} \right],\, m \in \left[ {1,M} \right]$. Here, $\Delta f$ and $\text{PRF}$ are the frequency interval and azimuth sampling rate, respectively. Accounting for all the measurements and possible measurement errors and noise, one can rearrange~\eqref{eq:echo2} in a matrix form
\begin{equation}
  {\mathbf{Y}} = {\mathbf{AXB}} + {\mathbf{N}} 
  \label{eq:echo matrix}
\end{equation}
where ${\mathbf{Y}} \in {{\mathbb{C} }^{N \times M}}$ and ${\mathbf{N}} \in {{\mathbb{C} }^{N \times M}}$ represent the ISAR echo matrix and noise matrix, respectively. The ISAR image to be reconstructed can be represented as a 2D matrix ${\mathbf{X}} \in {{\mathbb{C} }^{P \times Q}}$, where ${{\bf{X}}_{p,q}} = {A_{p,q}}$ represents the complex amplitude at the pixel $\left( {p,q} \right)$. ${\mathbf{A}} \in {{\mathbb{C} }^{N \times P}}$ and ${\mathbf{B}} \in {{\mathbb{C} }^{Q \times M}}$ denote the forward matrices for range and azimuth information acquisition respectively, which are given by 
\begin{equation}
  \begin{gathered}
  A_{n,p} = \exp \left( { - j\frac{{4\pi }}{c}\left( {{f_c} + n\Delta f} \right){x_p}} \right) \hfill \\
  B_{q,m} = \exp \left( { - j\frac{{4\pi {f_c}}}{c}\frac{{{y_q}\omega m}}{{{\text{PRF}}}}} \right) \hfill \\ 
\end{gathered}   
\end{equation}
where $\left( {{x_p},{y_q}} \right)$ represents the $\left( {p,q} \right)$-th entry of the ISAR image ${\mathbf{X}}$.
In practice, the continuous observation of the target in both the range frequency and azimuth time may be affected due to interferences, as well as radar system resource scheduling constraints. Consequently, sparse sampling has to be taken in both the range and azimuth dimensions, which can be expressed as
\begin{equation}
{{\mathbf{Y}}_s} = {{\mathbf{P}}_a}\left( {{\mathbf{AXB}} + {\mathbf{N}}} \right){{\mathbf{P}}_b} = {{\mathbf{A}}_s}{\mathbf{X}}{{\mathbf{B}}_s} + {{\mathbf{N}}_s}
\label{eq:AXB}
\end{equation}
where ${{\mathbf{Y}}_s}\in {{\mathbb{C} }^{N_s \times M_s}}$ represents the downsampled radar echo. 
${{\mathbf{P}}_a}\in {{\mathbb{R} }^{N_s \times N}}$ and ${{\mathbf{P}}_b}\in {{\mathbb{R} }^{M \times M_s}}$ denote the possible sparse sampling matrix in range and azimuth, respectively, which indicate that only $N_s$ valid range frequency points and $M_s$ valid azimuth time samples are acquired from $N$ range frequency samples and $M$ azimuth time samples. ${{\mathbf{A}}_s} \in {{\mathbb{C} }^{{N_s} \times P}}$ and ${{\mathbf{B}}_s} \in {{\mathbb{C} }^{Q \times {M_s}}}$ serve as the forward downsampled operators for range and azimuth. Additionally, ${{\mathbf{N}}_s}\in {{\mathbb{C} }^{N_s \times M_s}}$ represents the corresponding downsampled noise. As a result, the radar down-sampling rate can be defined as 
\begin{equation}
  \gamma  = \frac{{{N_s}{M_s}}}{{NM}} \times 100\% 
  \label{eq:sampling}
\end{equation}

As $\gamma$ is generally smaller than $100\%$, to recover $\mathbf{X}$ based on \eqref{eq:AXB} often leads to the sparse ISAR imaging.
Below, we briefly introduce a widely adopted sparse ISAR imaging method based on ADMM.

\subsection{Sparse ISAR imaging Solver based on ADMM}\label{Sec:2-2}
Sparse ISAR imaging methods typically achieve superior performance by effectively utilizing target prior knowledge and optimization solvers. Among various optimization techniques, ADMM has emerged as a widely adopted and computationally efficient approach for sparse ISAR image reconstruction~\cite{boyd2011distributed,9130034}. Specifically, It formulates the optimization problem as the sum of measurement constraints, as represented in~\eqref{eq:AXB}, combined with the prior regularization to enhance reconstruction accuracy.
    
\begin{equation}
  \mathop {\arg \min }\limits_{\mathbf{X}} \frac{1}{2}\left\| {{{\mathbf{Y}}_s} - {{\mathbf{A}}_s}{\mathbf{X}}{{\mathbf{B}}_s}} \right\|_F^2 + \lambda \mathcal{R}\left( {\mathbf{X}} \right) 
  \label{eq:step1}
\end{equation}
where ${\left\|  \cdot  \right\|_F}$ denotes the Frobenius norm of a matrix. $\mathcal{R}\left( {\mathbf{X}} \right)$ is the prior regularization term, while $\lambda$ controls the regularization strength. Since solving these two terms simultaneously is challenging, ADMM introduces an auxiliary variable ${\mathbf{Z}}$, allowing~\eqref{eq:step1} to be rewritten as
\begin{equation}
\mathop {\arg \min }\limits_{{\mathbf{X}},{\mathbf{Z}}} \frac{1}{2}\left\| {{{\mathbf{Y}}_s} - {{\mathbf{A}}_s}{\mathbf{X}}{{\mathbf{B}}_s}} \right\|_F^2 + \lambda \mathcal{R}\left( {\mathbf{Z}} \right)\quad{\text{s.t.}}\quad{\mathbf{X}} = {\mathbf{Z}}
\label{eq:step2}
\end{equation}
The constrained optimization problem in \eqref{eq:step2} can be further rewritten as an unconstrained one by using its corresponding augmented Lagrangian function
\begin{align}\label{eq:step3}\notag
  \hspace{-0.2cm}
  \mathop {\arg \min }\limits_{\mathbf{X},\mathbf{Z}} &\frac{1}{2}\left\| \mathbf{Y}_s - \mathbf{A}_s \mathbf{X} \mathbf{B}_s \right\|_F^2 + \lambda \mathcal{R}\left( {\mathbf{Z}} \right)  \\
  &\qquad\qquad+\left\langle \mathbf{\Phi}, \mathbf{X} - \mathbf{Z} \right\rangle 
   + \frac{\rho }{2}\left\| {{\mathbf{X}} - {\mathbf{Z}}} \right\|_F^2  
\end{align} 
where $\mathbf{\Phi}$ is the Lagrangian multiplier, and $\rho$ is the penalty coefficient. By applying a scaling transformation and introducing a dual variable, i.e., $\mathbf{U} = \mathbf{\Phi} / \rho$,~\eqref{eq:step3} can be reformulated as 
\begin{equation}\label{eq:step4}
  \mathop {\arg \min }\limits_{{\mathbf{X}},{\mathbf{Z}}} \frac{1}{2}\left\| {{{\mathbf{Y}}_s} - {{\mathbf{A}}_s}{\mathbf{X}}{{\mathbf{B}}_s}} \right\|_F^2 + \lambda \mathcal{R}\left( {\mathbf{Z}} \right){\text{ +  }}\frac{\rho }{2}\left\| {{\mathbf{X}} - {\mathbf{Z}} + {\mathbf{U}}} \right\|_F^2
\end{equation}

The optimization problem in~\eqref{eq:step4} can be solved by iteratively optimizing three optimization problems of $\mathbf{X}$, $\mathbf{Z}$, and $\mathbf{U}$
\begin{align}\label{eq:diedai}
    {{\mathbf{X}}^k} &= \mathop {\arg \min }\limits_{\mathbf{X}} \frac{1}{2}\left\| {{{\mathbf{Y}}_s} - {{\mathbf{A}}_s}{\mathbf{X}}{{\mathbf{B}}_s}} \right\|_F^2{\text{  +   }}\frac{\rho }{2}\left\| {{\mathbf{X}} - {\mathbf{Z}} + {\mathbf{U}}} \right\|_F^2 \notag\\
  {{\mathbf{Z}}^k}&=\mathop {\arg \min }\limits_{\mathbf{Z}} \lambda \mathcal{R}\left( {\mathbf{Z}} \right){\text{ +  }}\frac{\rho }{2}\left\| {{\mathbf{X}} - {\mathbf{Z}} + {\mathbf{U}}} \right\|_F^2 \\
  {{\mathbf{U}}^k} &= {{\mathbf{U}}^{k - 1}} + \rho \left( {{{\mathbf{X}}^k} - {{\mathbf{Z}}^k}} \right) \notag 
\end{align}

The estimated ISAR image ${\mathbf{X}}$ can be obtained through iterative optimization of~\eqref{eq:diedai}. In ISAR imaging, sparsity is commonly used as a prior term, typically represented by the $\ell _1$ norm, that is, $\mathcal{R}\left( {\mathbf{Z}} \right) = {\left\| {\bf{Z}} \right\|_1}$. The ${\ell _1}$ norm admits a proximal operator that corresponds to the soft-thresholding function. Therefore, the second step iteration in~\eqref{eq:diedai} can be explicitly expressed as follows
\begin{equation} \label{eq:soft}
{{\mathbf{Z}}^k} = \mathcal{S}\left( {{{\mathbf{X}}^k} + {{\mathbf{U}}^{k - 1}};{\lambda / \rho }} \right)
\end{equation}
where $\mathcal{S}\left(\cdot;\,\lambda  / \rho \right):\mathbb{C}^{P \times Q} \to \mathbb{C}^{P \times Q}$ is the soft threshold function, 
$\mathcal{S}\left( \mathbf{X}; \lambda/\rho  \right) = \text{sign}\left( {\mathbf{X}} \right)\max \left( \left| {\mathbf{X}} \right| - \lambda/\rho ,0 \right)$, and $\lambda / \rho $ is the threshold. Although it effectively suppresses the grating lobes and improves image quality, it can risk degrading the original structural information of the target~\cite{li2023integrated}. Moreover, the selection of hyperparameters in~\eqref{eq:diedai} typically requires empirical tuning, which limits the adaptability of the method~\cite{8550778}. 

Unfolding the above iterative process in~\eqref{eq:diedai} into a neural network offers a potential solution to these challenges~\cite{wang2022efficient, huang2025cv, pu2022sae}. However, such approaches typically require substantial amounts of paired ISAR training data, which may often be difficult to obtain.
In addition, relying solely on the globally consistent threshold sparsity constraint in~\eqref{eq:soft} may lead to suboptimal filtering performance, as it does not account for local features of the reconstructed image.

To address these limitations, we propose an unfolded ISAR network with a local feature-adaptive threshold mechanism for the learning of hidden local structural features from data. By integrating an equivariant constraint, the network enables self-supervised training using only incomplete radar echo, thus circumventing the need for paired ISAR training data.

\section{SS-ISAR-Net: a self-supervised learning-based ISAR imaging network with the equivariant constraint}\label{Sec:3}
To adaptively learn the hyperparameters in each iteration and efficiently extract prior information from training data, this section introduces an ADMM-based unfolded ISAR imaging network, which integrates a local feature-adaptive threshold (LFAT) mechanism. Building upon this framework, an equivariant constraint is incorporated to enable self-supervised, end-to-end training using only incomplete radar echo data. Additionally, to mitigate the impact of noise, a self-denoising module based on the N2N paradigm is introduced, further enhancing the robustness of the self-supervised training.

\subsection{Unfolded ISAR Imaging Network} \label{Sec:3-1}

In this section, an unfolded ISAR imaging network (named ISAR-Net) is proposed to automatically learn the hyperparameters in each iteration from the training data based on~\eqref{eq:diedai}, while incorporating a local feature-adaptive threshold mechanism to better capture the local structural characteristics. The following presents its detailed explanation.
    
\subsubsection{\textbf{Measurement reconstruction step}} 
For the first iteration step in~\eqref{eq:diedai}, obtaining a closed-form solution is challenging. Therefore, the gradient descent (GD) method is employed to get a numerical solution
\begin{align}
  {\bf{X}}_n^k &= {\bf{X}}_{n - 1}^k - l_x \left[ \rho \left( {\bf{X}}_{n - 1}^k - {\bf{Z}}^k + {\bf{U}}^k \right) \right. \notag \\[1.5mm]
  & \quad\quad\qquad \left. + {\bf{A}}_s^H \left( {\bf{A}}_s {\bf{X}}_{n - 1}^k {\bf{B}}_s - {\bf{Y}} \right) {\bf{B}}_s^H \right] \notag\\[1.5mm]
  &= \mu {\bf{X}}_{n - 1}^k + (1 - \mu) \left( {\bf{Z}}^k - {\bf{U}}^k \right) \notag \\[1.5mm]
  & \quad\quad\qquad - l_x {\bf{A}}_s^H \left( {\bf{A}}_s {\bf{X}}_{n - 1}^k {\bf{B}}_s - {\bf{Y}} \right) {\bf{B}}_s^H, \notag \\   
& \quad\quad\qquad\qquad n=1,\,2,\,\cdots,\, N
  \end{align}
where ${\mu} = \left( {1 - {l_x}\rho } \right)$, with ${l_x}$ denoting the step size of GD in $\mathbf{X}$ updating iteration. ${\mathbf{X}}_n^k$ represents the $n^\text{th}$ GD update in the $k^\text{th}$ iteration of ADMM, and $N=5$ is taken here.     
In this step, ${\mu}$ and ${{l_x}}$ are considered as learnable variables through different iterations.

\subsubsection{\textbf{Prior regularization step}}

A fixed threshold is generally used in~\eqref{eq:soft} over an entire image in the prior regularization iteration. However, since the target typically exhibits structural characteristics, the threshold for each pixel could be influenced by its surrounding features rather than remaining fixed throughout the image~\cite{li2023integrated, 9577972}. To overcome this limitation, a local feature-adaptive threshold (LFAT) is introduced, leading to a revised iterative formulation of~\eqref{eq:soft}
\begin{equation}
  {{\mathbf{Z}}^k} = \mathcal{S}\left( {{{\mathbf{X}}^k} + {{\mathbf{U}}^{k - 1}};\mathcal{T}\left( {{{\mathbf{X}}^k} + {{\mathbf{U}}^{k - 1}}} \right)} \right)
\end{equation}
where ${\mathcal{T}\left( {{{\mathbf{X}}^k} + {{\mathbf{U}}^{k - 1}}} \right)}$ represents the LFAT, which dynamically adjusts the threshold based on the surrounding target features for each pixel. In this paper, the LFAT is implemented using a convolution-based module, specifically
\begin{equation}\label{eq:adaptive threshold}
  \mathcal{T} \left( {{{\mathbf{X}}^k} + {{\mathbf{U}}^{k - 1}}} \right) = {\mathcal{C}_2}\left( {{\mathcal{R}}\left( {{\mathcal{C}_1}\left( {{{\mathbf{X}}^k} + {{\mathbf{U}}^{k - 1}}} \right)} \right)} \right)
\end{equation}
where $\mathcal{C}_1$ and $\mathcal{C}_2$ denote two different convolutional layers, and $\mathcal{R}$ represents a ReLU activation layer. The convolution-based module excels at learning structural information within the image and can adaptively determine an appropriate threshold by learning the structural features of the ISAR image.

\subsubsection{\textbf{Dual ascent step}}  
For the third step in~\eqref{eq:diedai}, the  dual ascent is adaptively performed by setting the penalty variable ${{\rho ^k}}$ as a learnable variable in different stage $k$. 
\begin{equation}\label{eq:adaptive ruo}
  {{\mathbf{U}}^k} = {{\mathbf{U}}^{k - 1}} + {\rho ^k}\left( {{{\mathbf{X}}^k} - {{\mathbf{Z}}^k}} \right)
\end{equation}
 
Thus far, the ISAR-Net has been formulated and can be trained with paired radar ISAR data to achieve high-quality ISAR image reconstruction. However, obtaining such paired datasets in the ISAR domain remains a significant challenge. To address this limitation, a self-supervised training framework that relies solely on incomplete radar echo data is suggested below.

\subsection{Self-supervised learning based on the equivariant constraint for ISAR-Net}\label{Sec:3-2}
In this section, a rotating equivariant constraint is introduced into the previously proposed ISAR-Net to achieve self-supervised learning. It equivalently expands the representation space of the ISAR forward operator. From the perspective of inversion, it injects complementary information on top of incomplete radar echo measurements, mitigating the possible aliasing in sparse ISAR imaging.

To facilitate understanding of the rotating equivariant constraint in ISAR imaging, the model in~\eqref{eq:AXB} is simplified to ${{\mathbf{Y}}_s} = {{\mathbf{F}}_s}{\mathbf{X}}$, where ${{\mathbf{F}}_s}$ can be interpreted as a forward matrix integrating both azimuth and range. Noise-related issues will be solved in Sec~\ref{Sec:3-3}. The incompleteness of ${{\mathbf{F}}_s}$ introduces a null space, thereby expanding the solution space and leading to the following suboptimal solution.
\begin{equation}
{\mathbf{\hat X}} = {\mathbf{F}}_s^\dag {{\mathbf{F}}_s}{\mathbf{X}} + {{\mathbf{N}}_{{{\mathbf{F}}_s}}} = {\mathbf{X}} + {{\mathbf{N}}_{{{\mathbf{F}}_s}}}
  \label{subsolution}
\end{equation}
here, $\mathbf{F}_s^\dag $ represents the pseudo-inverse of $\mathbf{F}_s$, while ${{\mathbf{N}}_{{{\mathbf{F}}_s}}} \in \mathcal{N}\left( {{{\mathbf{F}}_s}} \right)$ is an element of the null space of ${\mathbf{F}}_s$, i.e., $\mathcal{N}\left( {{{\mathbf{F}}_s}} \right)$, satisfying ${{\mathbf{F}}_s}{{\mathbf{N}}_{{{\mathbf{F}}_s}}} = \emptyset$. In ISAR imaging systems, existence of a non-trivial null space of the forward sensing operator $\mathbf{F}_s$ primarily manifests as pronounced grating lobes that significantly degrade image quality. While the suboptimal solution in~\eqref{subsolution} satisfies the measurement constraints in~\eqref{eq:AXB}, it cannot guarantee to be an accurate ISAR image reconstruction. To address this fundamental limitation, to reduce the null space of the forward operator ${\mathbf{F}}_s$, thereby suppressing grating lobes artifacts, is a feasible solution. The equivariant constraint introduced in the following provides a mathematically rigorous framework to achieve this objective.

\begin{figure}[t]
    \centering
  \includegraphics[width =4cm, height =4cm]{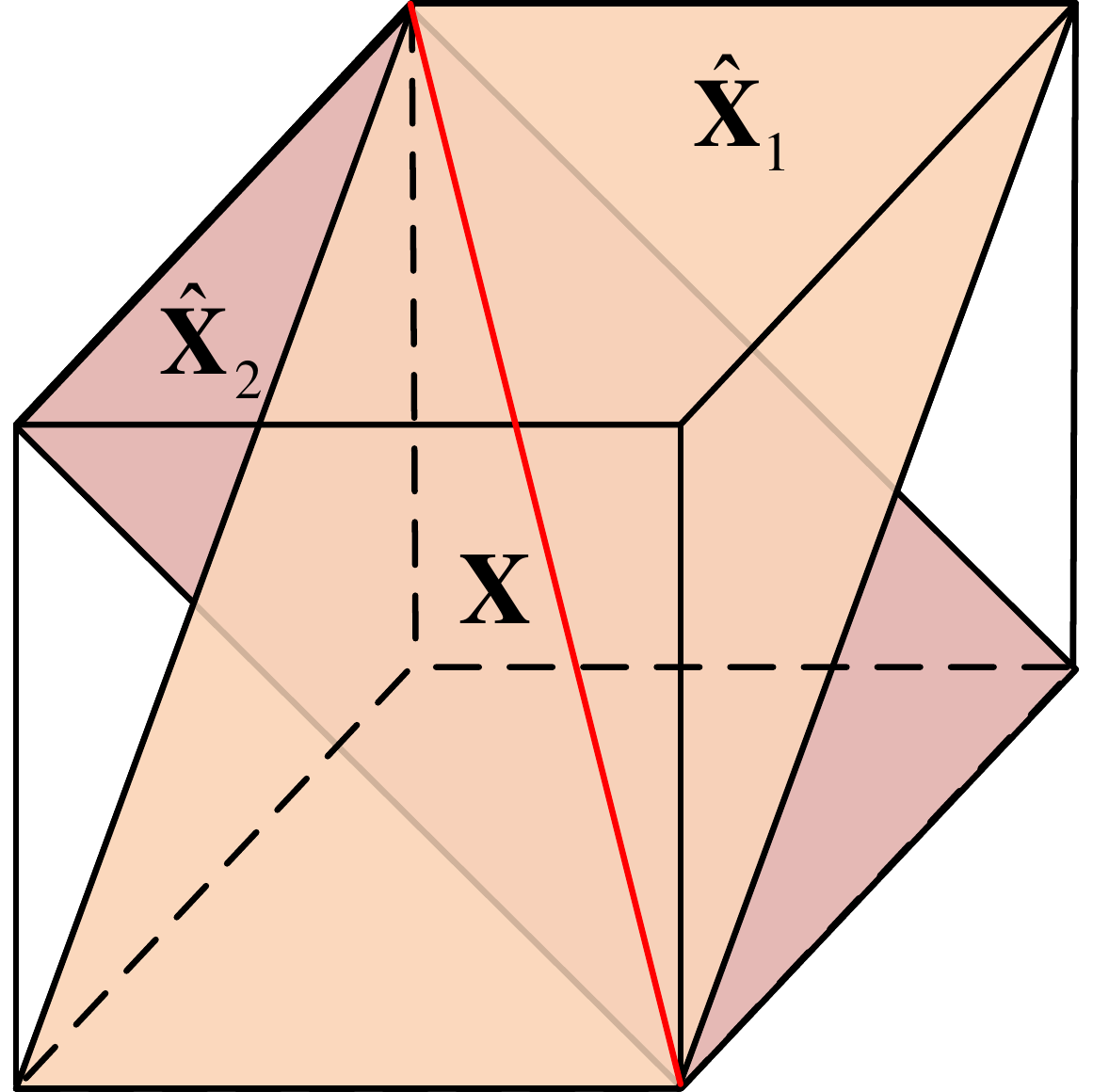}
  \caption{Illustration of constraining the data solution space of different ISAR forward model, where the red line represents the constrained ISAR image $\mathbf{X}$.}\label{fig:range space}
\end{figure}

\begin{figure*}[t]
  \centering
  \includegraphics[width =17.3cm, height =7cm]{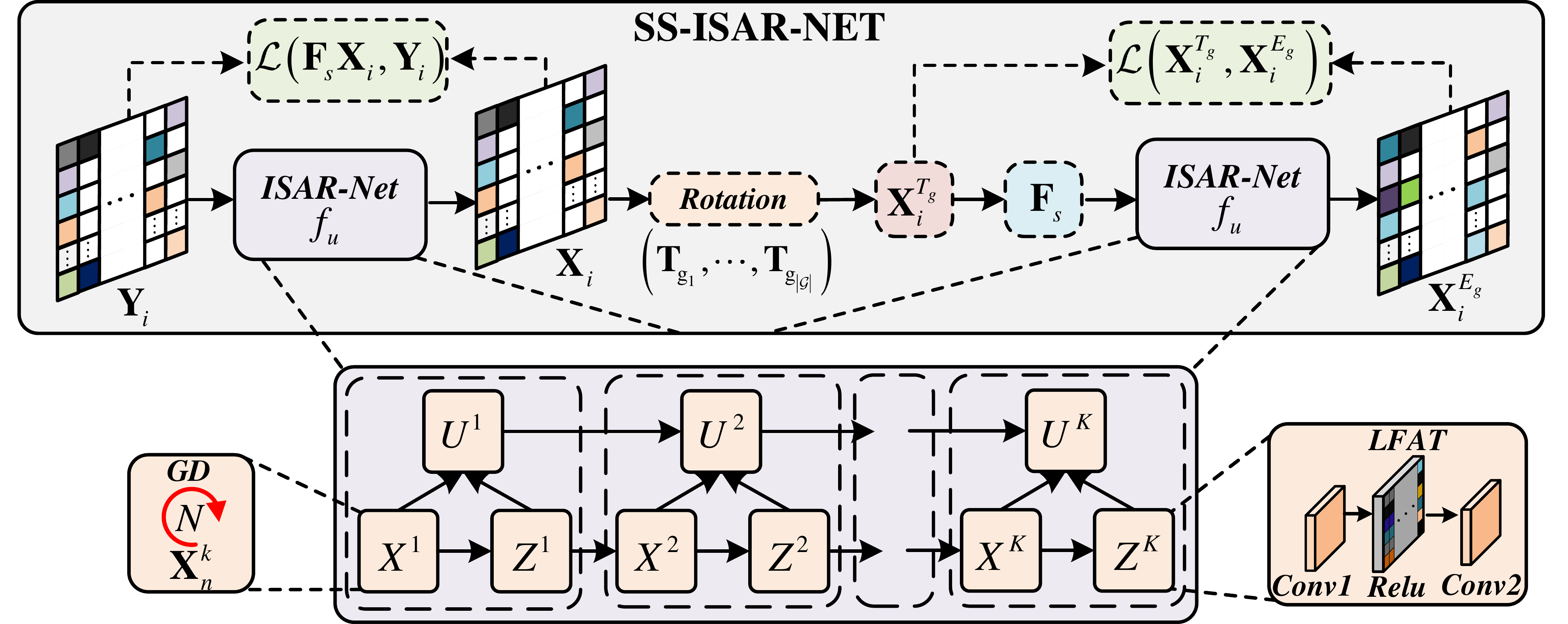}
  \caption{Illustration of the proposed SS-ISAR-Net, where the upper part is the overall architecture of SS-ISAR-Net with the equivariant constraint, and the lower part is the specific structure of the ISAR-Net with the LFAT module.}\label{fig:proposed}
\end{figure*}

The equivariant constraint is an inherent prior in imaging systems for natural images~\cite{chen2021equivariant}. It asserts that when an image transforms being input into the imaging system, the resulting output should exhibit the same transformation. Specifically, for an ISAR imaging system with input ISAR echo $\mathbf{Y}_s$ and corresponding forward model $\mathbf{F}_s$, denoted as $f\left( \mathbf{Y}_s, \mathbf{F}_s \right)$, the equivariant constraint can be formulated as
\begin{equation}
f\left( \mathbf{F}_s \mathbf{T}_g \mathbf{X}, \mathbf{F}_s \right) = \mathbf{T}_g f\left( \mathbf{F}_s \mathbf{X}, \mathbf{F}_s \right)
\label{equivariant}
\end{equation}
where $\mathbf{T}_g$ represents a transformation operator, such as image rotation, flipping, translation, etc. Based on~\eqref{subsolution}, the suboptimal solutions on both sides of~\eqref{equivariant} can be expressed as
\begin{equation}
\begin{gathered}
  {{{\mathbf{\hat X}}}_1} = f\left( {{{\mathbf{F}}_s}{{\mathbf{T}}_g}{\mathbf{X}},{{\mathbf{F}}_s}} \right) = {{\mathbf{T}}_g}{\mathbf{X}} + {{\mathbf{N}}_{{{\mathbf{F}}_s}}} \hfill \\[1mm]
  {{{\mathbf{\hat X}}}_2} ={{\mathbf{T}}_g}f\left( {{{\mathbf{F}}_s}{\mathbf{X}},{{\mathbf{F}}_s}} \right) = {{\mathbf{T}}_g}{\mathbf{X}} + {{\mathbf{T}}_g}{{\mathbf{N}}_{{{\mathbf{F}}_s}}} \hfill \\ 
\end{gathered} 
\end{equation}

Due to the equivariant constraint, i.e., $\mathbf{\hat X}_1 = \mathbf{\hat X}_2$, the following relationship can be derived
\begin{equation}
  {{\mathbf{T}}_g}{\mathbf{X}} + {{\mathbf{N}}_{{{\mathbf{F}}_s}}} = {{\mathbf{T}}_g}{\mathbf{X}} + {{\mathbf{T}}_g}{{\mathbf{N}}_{{{\mathbf{F}}_s}}}
\end{equation}

To ensure the validity of the above equation, the following condition must be satisfied
\begin{equation}
  {{\mathbf{N}}_{{{\mathbf{F}}_s}}} = {{\mathbf{T}}_g}{{\mathbf{N}}_{{{\mathbf{F}}_s}}} = {{\mathbf{N}}_{{{\mathbf{T}}_g}{{\mathbf{F}}_s}}}
  \label{eq:0}
\end{equation}

We know that~\eqref{eq:0} holds only when $\mathbf{T}_g{{\mathbf{F}}_s}$ is independent of ${{\mathbf{F}}_s}$. It occurs if and only if the null space of ${{\mathbf{F}}_s}$, denoted as $\mathcal{N}\left( {{{\mathbf{F}}_s}} \right)$, is reduced to $\emptyset$. This confirms that the equivariant constraint effectively eliminates the null space. In other words, the equivariant constraint extends the forward operator by incorporating transformations, forming a new operator ${{{\mathbf{T}}_g}{{\mathbf{F}}_s}}$. As shown in Fig.~\ref{fig:range space}, a simple schematic diagram demonstrates the reduction of the null space after introducing an equivariant constraint. 

Furthermore, due to the potentially highly sparse radar sampling, which results in a large null space, relying on a single transformation is typically insufficient. Instead, multiple image transformations are necessary to further constrain the solution space and improve ISAR reconstruction quality. 
By defining a set of transformations as $\mathcal{G}\left( {{{\mathbf{g}}_1},{{\mathbf{g}}_2}, \cdots ,{{\mathbf{g}}_{\left| \mathcal{G} \right|}}} \right)$ (where $|\mathcal{G}|$ is the total number of these transformations), an expanded forward operator ${{\mathbf{F}}_g}$ can be constructed, which integrates multiple equivariant constraints to refine the reconstruction process
\begin{equation}
{{\mathbf{F}}_g} = {\left[ {{{\mathbf{T}}_{{g_1}}}{{\mathbf{F}}_s},{{\mathbf{T}}_{{g_2}}}{{\mathbf{F}}_s}, \cdots ,{{\mathbf{T}}_{g_{\left| \mathcal{G} \right|}}}{{\mathbf{F}}_s}} \right]^T} \in {\mathbb{C}^{\left| \mathcal{G} \right|M \times N}}
\end{equation}
The necessary condition for eliminating the null space and obtaining the optimal ISAR image is the following
\begin{equation} \bigcap\limits_{{{\bf{g}}_i} \in {\cal G}} \mathcal{N}\left( {{{\mathbf{T}}_g}{{\mathbf{F}}_s}} \right) = \emptyset 
\label{condition}
\end{equation}

Condition \eqref{condition} ensures that the null space is minimized by leveraging multiple equivariant transformations, leading to the following necessary condition. 

\begin{theorem}\label{theorem1}
  A necessary condition for obtaining the optimal ISAR image through equivariant constraint is that the expanded forward operator satisfies $rank\left( {{{\mathbf{F}}_g}} \right) =N$.
\end{theorem}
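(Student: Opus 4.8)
The plan is to reduce the statement to a pure finite-dimensional linear-algebra fact about the null space of the stacked operator $\mathbf{F}_g$ and then close it with the rank--nullity theorem. First I would fix the meaning of ``obtaining the optimal ISAR image'' within the framework set up above: applying the suboptimal-solution form \eqref{subsolution} to the expanded system $\mathbf{Y}_g = \mathbf{F}_g\mathbf{X}$ rather than to $\mathbf{F}_s\mathbf{X}$, every admissible reconstruction has the shape $\hat{\mathbf{X}} = \mathbf{X} + \mathbf{N}_{\mathbf{F}_g}$ with $\mathbf{N}_{\mathbf{F}_g}\in\mathcal{N}(\mathbf{F}_g)$. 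Hence the grating-lobe ambiguity vanishes and $\hat{\mathbf{X}}$ is guaranteed to equal the true image --- i.e. the image is ``optimal'' --- if and only if $\mathcal{N}(\mathbf{F}_g)=\{\mathbf{0}\}$.

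The second step identifies $\mathcal{N}(\mathbf{F}_g)$ with the intersection appearing just before the theorem. Since $\mathbf{F}_g$ is the vertical concatenation of the blocks $\mathbf{T}_{g_i}\mathbf{F}_s$, a vector $\mathbf{v}$ satisfies $\mathbf{F}_g\mathbf{v}=\mathbf{0}$ exactly when $\mathbf{T}_{g_i}\mathbf{F}_s\mathbf{v}=\mathbf{0}$ for every $i$, so $\mathcal{N}(\mathbf{F}_g)=\bigcap_{g_i\in\mathcal{G}}\mathcal{N}(\mathbf{T}_{g_i}\mathbf{F}_s)$, which by the discussion leading to \eqref{eq:0} is precisely what must collapse to $\{\mathbf{0}\}$. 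Combining with the first step, optimality $\Longleftrightarrow \mathcal{N}(\mathbf{F}_g)=\{\mathbf{0}\} \Longleftrightarrow \dim\mathcal{N}(\mathbf{F}_g)=0$.

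Third, I would invoke the rank--nullity theorem for $\mathbf{F}_g$, whose number of columns is $N$: $\operatorname{rank}(\mathbf{F}_g) + \dim\mathcal{N}(\mathbf{F}_g) = N$. Thus $\dim\mathcal{N}(\mathbf{F}_g)=0$ holds iff $\operatorname{rank}(\mathbf{F}_g)=N$, i.e. $\mathbf{F}_g$ has full column rank. Reading this as a necessary condition: if $\operatorname{rank}(\mathbf{F}_g)<N$ then $\dim\mathcal{N}(\mathbf{F}_g)\ge 1$, so there is a nonzero $\mathbf{N}_{\mathbf{F}_g}$ and the reconstruction $\mathbf{X}+\mathbf{N}_{\mathbf{F}_g}$ cannot be guaranteed to coincide with $\mathbf{X}$; contrapositively, optimality forces $\operatorname{rank}(\mathbf{F}_g)=N$. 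As a free byproduct one also gets the practical lower bound $|\mathcal{G}|\,M\ge N$ on the number of equivariant transformations, since $\mathbf{F}_g$ has $|\mathcal{G}|M$ rows.

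The one genuinely delicate point --- and where I would be most careful --- is the first step: justifying that ``the optimal ISAR image is obtained'' is equivalent to the \emph{entire} reconstruction ambiguity collapsing, i.e. to $\mathcal{N}(\mathbf{F}_g)=\{\mathbf{0}\}$, rather than to some weaker condition; this relies on the structural form \eqref{subsolution} and on the equivariance-induced identity \eqref{eq:0}, together with the implicit requirement that the $\mathbf{T}_{g_i}$ be chosen so that the stacked system genuinely shrinks the null space (otherwise the intersection in \eqref{eq:0} would never be reduced). The remaining steps are routine. I would close with a remark that the condition is necessary but not sufficient: even with $\operatorname{rank}(\mathbf{F}_g)=N$, the unfolded network must still converge to the corresponding regularized least-squares solution for the reconstruction to actually attain the optimum.
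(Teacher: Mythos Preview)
Your proposal is correct and follows essentially the same contrapositive argument as the paper: if $\operatorname{rank}(\mathbf{F}_g)<N$ then the null space of $\mathbf{F}_g$ is nontrivial, so a nonzero $\mathbf{N}_{\mathbf{F}_g}$ exists and $\hat{\mathbf{X}}=\mathbf{X}+\mathbf{N}_{\mathbf{F}_g}\neq\mathbf{X}$. The paper's own proof is a terse two-line version of this, whereas you spell out the rank--nullity step, the block-stacking identification $\mathcal{N}(\mathbf{F}_g)=\bigcap_i\mathcal{N}(\mathbf{T}_{g_i}\mathbf{F}_s)$, and the interpretation of ``optimal'' explicitly --- all welcome elaborations, but not a different route.
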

\begin{proof}\label{proof11}
If the rank of the expanded forward operator is less than $N$, it indicates that the null space still exists, i.e., there is no guarantee that ${\mathbf{N}}\left( {{{\mathbf{F}}_s}} \right) = \emptyset$. This will lead to suboptimal solutions, ${\mathbf{\hat X}} = {\mathbf{X}} +{{\mathbf{N}}_{{{\mathbf{F}}_s}}} \ne {\mathbf{X}}$.
\end{proof}  

With the theoretical foundation of ISAR imaging using equivariant constraints established, the following will focus on the practical implementation of this approach.
Specifically, according to Theorem~\ref{theorem1}, 
the selection of image transformations will be guided by the condition that the extended ISAR forward operator ${{\bf{F}}_g}$ leads to an overdetermined system, i.e. $\left| \mathcal{G} \right| \times \gamma  > 1$ must be satisfied, ensuring the effectiveness of the equivariant constraint in reducing the grating lobes. Since ISAR images are typically centered within the image frame, this paper adopts rotation as the selected image transformation to enforce the equivariant constraint. 

The flowchart of the proposed SS-ISAR-Net is illustrated in Fig.~\ref{fig:proposed}, where the ISAR-Net ${f_u}\left(  \cdot  \right)$ is designed as described in Section~\ref{Sec:3-1}. The training loss function consists of two components: measurement consistency (MC) loss and equivariant consistency (EC) loss. Mathematically, it can be expressed as  
\begin{align}\label{eq:loss1}
&\mathcal{L} = \mathcal{L}_\mathrm{MC} + \alpha \mathcal{L}_\mathrm{EC} = \sum\limits_{i = 1}^I {\left[ {\left\| {{{\bf{Y}}_i} - {{\bf{F}}_s}{f_u}\left( {{{\bf{Y}}_i},{{\bf{F}}_s}} \right)} \right\|_F^2} \right.} \notag\\[-3mm]
 &+ \alpha \sum\limits_{g = 1}^{\left| {\cal G} \right|} {\left. {\left\| {{{\bf{T}}_g}{f_u}\left( {{{\bf{Y}}_i},{{\bf{F}}_s}} \right) - {f_u}\left( {{{\bf{F}}_s}{{\bf{T}}_g}{f_u}\left( {{{\bf{Y}}_i},{{\bf{F}}_s}} \right),{{\bf{F}}_s}} \right)} \right\|_F^2} \right]} 
\end{align}
where $\alpha$ is the regularization coefficient that balances the MC loss and EC loss, and $I$ is the number of incomplete ISAR echo data used for training. It should be noted that ${{\mathbf{F}}_s}{f_u}\left( {{{\mathbf{Y}}_i},{{\mathbf{F}}_s}} \right) = {{\mathbf{A}}_s}{f_u}\left( {{{\mathbf{Y}}_i},{{\mathbf{F}}_s}} \right){{\mathbf{B}}_s}$. With this formulation, the self-supervised learning of ISAR-Net based on the equivariant constraint, named SS-ISAR-Net, is completed. However, since the equivariant constraint does not account for noise, SS-ISAR-Net can work well in high signal-to-noise Ratio (SNR) scenarios. To extend operational robustness,  a self-denoising module to enhance robustness against noise interference is suggested below.

\begin{figure}[t]
  \centering
  \includegraphics[width =8cm, height =8.9cm]{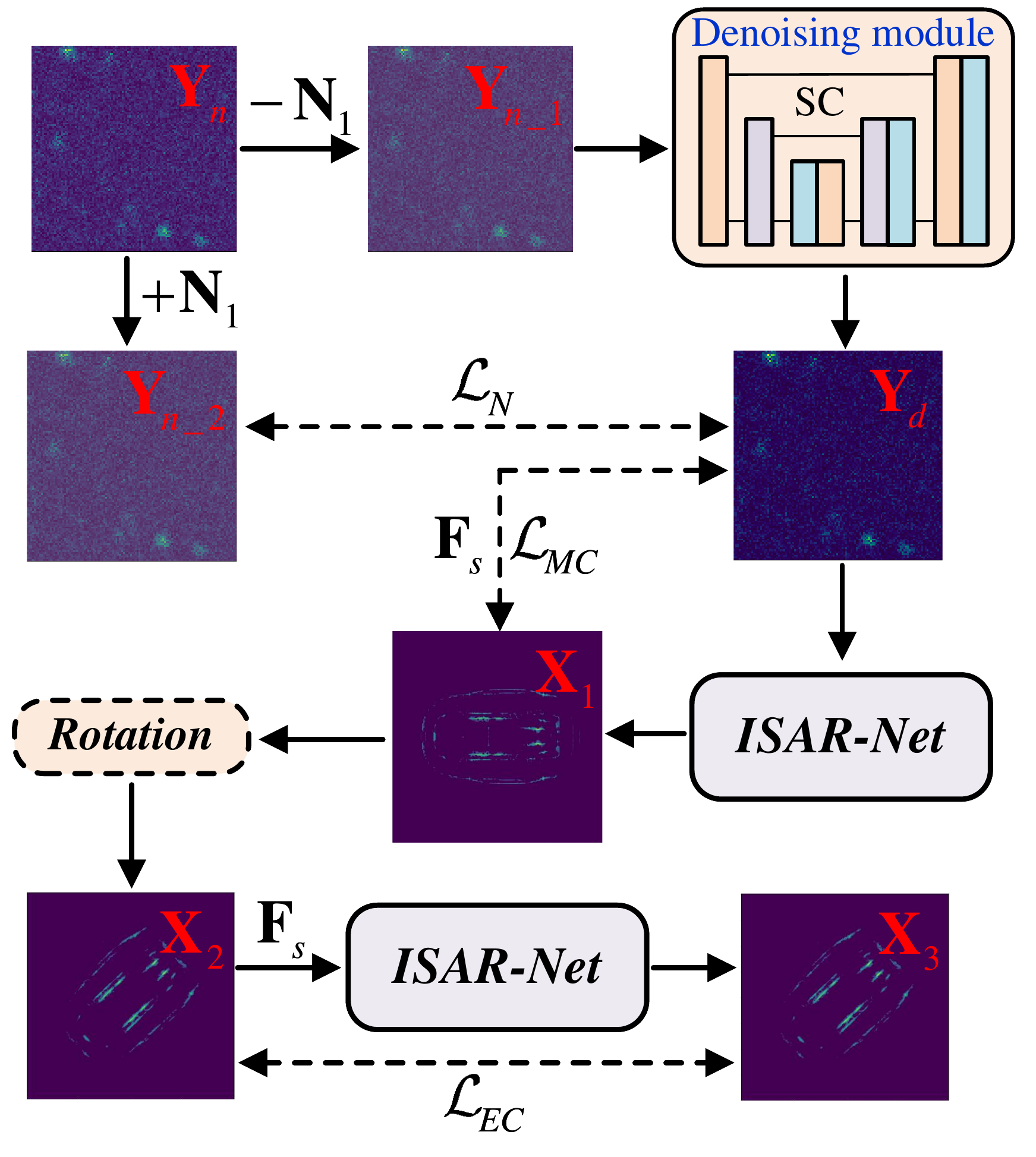}
  \caption{Illustration of Extending self-denoising to SS-ISAR-Net.}\label{fig:self-denoising}
\end{figure}

\subsection{Integrating self-denoising module to SS-ISAR-Net}\label{Sec:3-3}

The inherent non-equivariance property of noise in the proposed self-supervised method based on the equivariant constraint limits its robustness, particularly under low-SNR conditions. 
To address this issue, a self-denoising strategy is introduced to the SS-ISAR-Net framework, significantly improving its robustness by mitigating noise effects. 

Inspired by the concept of N2N~\cite{lehtinen2018noise2noise,qiao2024zero,pang2021recorrupted}, we adopt a similar strategy to denoise radar echo. Specifically, a radar echo corrupted by complex Gaussian noise can be expressed as
\begin{equation}
  {{\mathbf{Y}}_n} = {\mathbf{Y}} + {\mathbf{N}}
\end{equation}
where ${\mathbf{Y}}$ is the desired clean echo, and ${\mathbf{N}} \sim \mathcal{CN}\left( {0,{\sigma ^2}I} \right)$ denotes the complex-valued Gaussian noise with variance ${\sigma ^2}$. By assuming that the noise variance ${\sigma ^2}$ is known, two new re-corrupted echoes can be generated by adding and subtracting noise with the same variance to the original echo ${{\mathbf{Y}}_n}$ 
\begin{equation}
\begin{gathered}
  {{\mathbf{Y}}_{n\_1}} = {{\mathbf{Y}}_n} + {{\mathbf{N}}_1} = {\mathbf{Y}} + {\mathbf{N}} + {{\mathbf{N}}_1} = {\mathbf{Y}} + {{\mathbf{N}}_{n1}} \hfill \\
  {{\mathbf{Y}}_{n\_2}} = {{\mathbf{Y}}_n} - {{\mathbf{N}}_1} = {\mathbf{Y}} + {\mathbf{N}} - {{\mathbf{N}}_1} = {\mathbf{Y}} + {{\mathbf{N}}_{n2}} \hfill \\ 
\end{gathered} 
\label{Noise}    
\end{equation}
where ${\mathbf{N}_1} \sim \mathcal{CN}\left( {0,{\sigma ^2}I} \right)$ represents the added and subtracted complex-valued Gaussian noise. As formally proven in Appendix~\ref{proof1}, the resulting noise components ${{\mathbf{N}}_{n1}}$ and ${{\mathbf{N}}_{n2}}$ are statistically independent. Based on this property, for a denoising network ${f_d}\left(  \cdot  \right)$, using ${{\mathbf{Y}}_{n1}}$ as input and computing the mean squared error (MSE) loss with ${{\mathbf{Y}}_{n2}}$ is mathematically equivalent to computing the MSE loss with the clean echo ${{\mathbf{Y}}}$, which can be expressed as 
\begin{align} \label{eq:derivation1}
  &\mathbb{E}_{\mathbf{Y}_{n\_1}, \mathbf{Y}_{n\_2}} 
  \left\{ \left\| f_d \left( \mathbf{Y}_{n\_1} \right) - \mathbf{Y}_{n\_2} \right\|_F^2 \right\} \notag \\ 
  &= \mathbb{E}_{\mathbf{N}_{n1}, \mathbf{N}_{n2}} 
  \left\{ \left\| f_d \left( \mathbf{Y} + \mathbf{N}_{n1} \right) - \mathbf{Y} \right\|_F^2 \right\}
\end{align}
This property allows the network to learn noise removal without requiring access to clean data, making it an effective self-denoising approach.
Following a similar derivation, the MC loss for the denoised echo input to the imaging network can also be demonstrated to be equivalent to the loss computed with the clean echo $\mathbf{Y}$.
\begin{align} \label{eq:derivation2}
  &\mathbb{E}_{\mathbf{Y}_{n\_1}, \mathbf{Y}_{n\_2}} 
  \left\{ \left\| \mathbf{A} f_u \left( f_d \left( \mathbf{Y}_{n\_1} \right) \right) - \mathbf{Y}_{n\_2} \right\|_F^2 \right\} \notag \\ 
  &= \mathbb{E}_{\mathbf{N}_{n1}, \mathbf{N}_{n2}} 
  \left\{ \left\| \mathbf{A} \left( f_u \left( f_d \left( \mathbf{Y} + \mathbf{N}_{n1} \right) \right) - \mathbf{X} \right) \right\|_F^2 \right\}
\end{align}
This ensures that the self-denoising approach effectively mitigates noise interference while preserving measurement fidelity, further enhancing the robustness of the proposed SS-ISAR-Net.

The detailed derivations of~\eqref{eq:derivation1} and~\eqref{eq:derivation2} are provided in Appendix~\ref{proof2}. The flowchart illustrating the integration of self-denoising into SS-ISAR-Net is depicted in Fig.~\ref{fig:self-denoising}, where the denoising module is implemented by a simple U-Net \cite{ronneberger2015u}. Combining~\eqref{eq:loss1}, ~\eqref{eq:derivation1}, and~\eqref{eq:derivation2}, the final training loss function can be expressed as
\begin{align}\label{eq:loss2}
{\cal L} = {{\cal L}_N} + {\cal L}_{\text{MC}} + \alpha {{\cal L}_\text{EC}} = \sum\limits_{i = 1}^I {\left[ {\left\| {{{\bf{Y}}_d} - {{\bf{Y}}_{n\_2}}} \right\|_F^2} \right.} \notag\\
+ \left\| {{{\bf{F}}_s}{{\bf{X}}_1} - {{\bf{Y}}_{n\_2}}} \right\|_F^2 + \alpha \sum\limits_{g = 1}^{\left| {\cal G} \right|} {\left. {\left\| {{{\bf{X}}_2} - {{\bf{X}}_3}} \right\|_F^2} \right]} 
\end{align}
where
\begin{equation}
\begin{gathered}\notag
  {{\mathbf{Y}}_d} = {f_d}\left( {{{\mathbf{Y}}_{n\_1}}} \right),{{\mathbf{X}}_1} = {f_u}\left( {{{\mathbf{Y}}_d},{{\mathbf{F}}_s}} \right) \hfill \\[1mm]
  {{\mathbf{X}}_2} = {{\mathbf{T}}_g}{{\mathbf{X}}_1},{{\mathbf{X}}_3} = {f_u}\left( {{{\mathbf{F}}_s}{{\mathbf{X}}_2},{{\mathbf{F}}_s}} \right) \hfill 
\end{gathered} 
\end{equation}

So far, the self-denoising module has been introduced into SS-ISAR-Net to mitigate the impact of noise on the equivariant constraint, especially in the case of strong noise.

\section{Experiments}\label{Sec:4}
In this section, we first introduce several evaluation metrics used to characterize the performance of ISAR imaging methods. 
Then, both numerical simulations and experiments were conducted to assess the effectiveness of the proposed SS-ISAR-Net and compare it with a few state-of-the-art methods, i.e., RD method, Linear ADMM (LADMM)~\cite{wang20223}, Reinforcement Learning-based SwinRL-ADMM (supervised, requiring paired data)~\cite{li2025tuning}, SFH-ADMM-Net (supervised, requiring paired data)~\cite{wang2022efficient}, and Sup-ISAR-Net (supervised, requiring paired data), which refers to the proposed ISAR-Net in Section~\ref{Sec:3-1} trained using paired data.

\begin{figure*}[t]
  \centering
  \subfloat[]{\includegraphics[width =5.7cm, height =4.3cm]{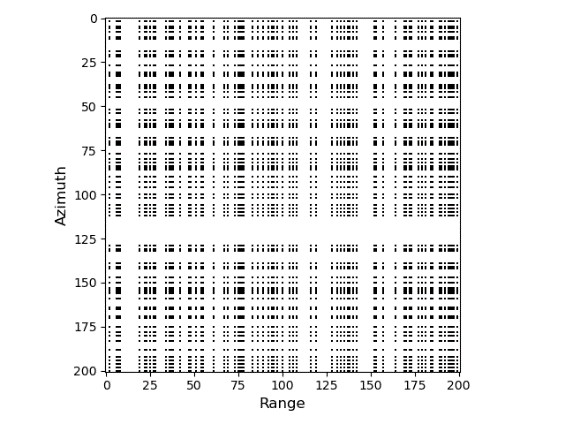}\label{fig:1}}
  \subfloat[]{\includegraphics[width =5.7cm, height =4.3cm]{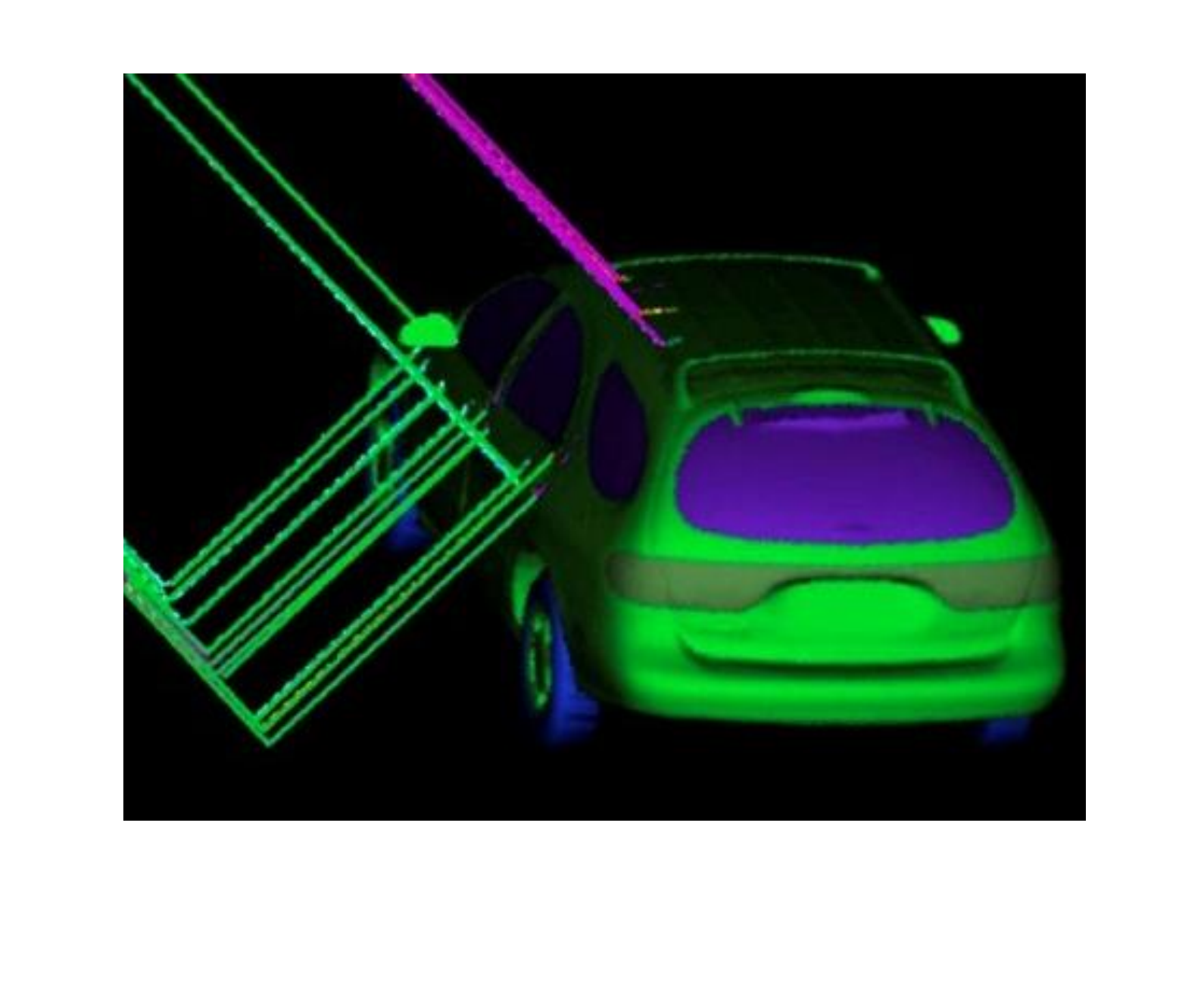}\label{fig:2}}
  \subfloat[]{\includegraphics[width =5.7cm, height =4.3cm]{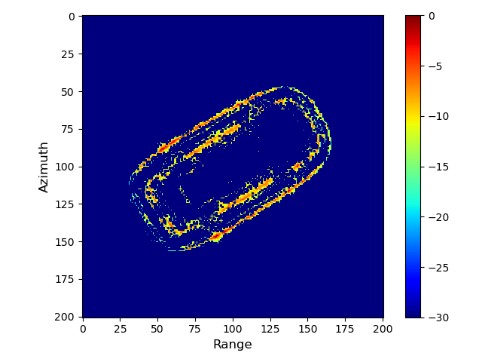}\label{fig:3}}
  \vspace{0.2cm}
  \caption{\protect\subref{fig:1} Sparse sampling mode of ISAR echo.
  \protect\subref{fig:2} Electromagnetic vehicle model.
  \protect\subref{fig:3} ISAR image of complete echo.}\label{fig:samplingmode}
\end{figure*}
\subsection{Description of Metrics}
To quantitatively assess the reconstruction performance of different imaging methods, this paper employs three key metrics: Normalized Mean Square Error (NMSE), Peak Signal-to-Noise Ratio (PSNR), and Structural Similarity Index Measure (SSIM). They are defined as follows
\begin{align}
\begin{gathered}
  {\text{NMSE}}\left( {{\mathbf{X}},{\mathbf{\hat X}}} \right) = \frac{{\left\| {{\mathbf{X}} - {\mathbf{\hat X}}} \right\|_F^2}}{{\left\| {\mathbf{X}} \right\|_F^2}}\notag \\ 
  {\text{PSNR}}\left( {{\mathbf{X}},{\mathbf{\hat X}}} \right) = 10{\log _{10}}\frac{{{{\left( {{{\text{p}}_{{\text{max}}}}} \right)}^2}}}{{\frac{1}{{PQ}}\left\| {{\mathbf{X}} - {\mathbf{\hat X}}} \right\|_F^2}}\notag \\ 
  {\text{SSIM}}\left( {{\mathbf{X}},{\mathbf{\hat X}}} \right) = \frac{{\left( {2{\mu _{\mathbf{X}}}{\mu _{{\mathbf{\hat X}}}} + {c_1}} \right)\left( {2{\sigma _{{\mathbf{X\hat X}}}} + {c_2}} \right)}}{{\left( {\mu _{\mathbf{X}}^2 + \mu _{{\mathbf{\hat X}}}^2 + {c_1}} \right)\left( {\sigma _{\mathbf{X}}^2 + \sigma _{{\mathbf{\hat X}}}^2 + {c_2}} \right)}} \notag\\ 
\end{gathered} 
\end{align}
where $\mathbf{X}$ and $\mathbf{\hat X}$ are the ground truth and the reconstructed image, respectively, and the ISAR image of complete echo is used as the ground truth.  
$\rm{p}_\text{max}$ indicates the maximum amplitude in the image. In this paper, due to normalization, $\rm{p}_\text{max}=1$.  ${\mu _{\mathbf{X}}},\, {\mu _{{\mathbf{\hat X}}}},\, {\sigma _{\mathbf{X}}},\, {\sigma _{{\mathbf{\hat X}}}}$ denote the means and standard deviations of ${\mathbf{X}}$ and ${\mathbf{\hat X}}$, while ${\sigma _{{\mathbf{X\hat X}}}}$ is the corresponding covariance matrix. $c_1 = ( k_1 \rm{p}_\text{max} )^2,\, c_2 = ( k_2 \rm{p}_\text{max} )^2$, and $k_1 = 0.01,\, k_2 = 0.03$. A lower NMSE indicates a smaller reconstruction error, whereas higher PSNR and SSIM values suggest that the reconstruction result is more faithful to the ground truth.

\subsection{Description of networks training and implementation details}
These networks are trained using the GOTCHA civilian vehicle electromagnetic simulation dataset, which comprises nine vehicle classes~\cite{dungan2010civilian}. Among these, eight classes are designated for training, while one class is reserved for testing. Reference ground truth images are generated through back projection (BP) imaging of complete echo data across multiple pitch angles. This procedure results in 32 images, each with size $200\times 200$. After amplitude normalization, targets with amplitudes below 0.01 are discarded. To augment the training dataset, image rotation, and flipping are applied for data augmentation, increasing the total number of training ground truth images to 192. These images are subsequently used to construct incomplete ISAR echo training data pairs based on~\eqref{fig:ISAR}. Note that the range and azimuth down-sampling rates for the training are set at approximately 64$\,\%$, resulting in a total sampling rate of around 40$\,\%$. Additionally, to independently validate methods described in Secs~\ref{Sec:3-2} and~\ref{Sec:3-3} in different noise conditions, the training process is divided into weak-noise and strong-noise categories. These paired training data are utilized for training SFH-ADMM-Net, SwinRL-ADMM, and Sup-ISAR-Net. In contrast, the proposed SS-ISAR-Net is trained solely on incomplete ISAR echo data and has never been exposed to ground truth images throughout the learning process.

The equivariant constraint is implemented through image rotation transformations ${{\mathbf{T}}_{\text{g}}}$ during network training. Specifically, each training iteration randomly selects three rotation angles uniformly distributed between $0^\circ$ to $360^\circ$. As proved in Theorem~\ref{theorem1}, this configuration satisfies the mathematical condition $3\times40\,\%>1$ required for effective constraint enforcement. The sampling diagram, the electromagnetic vehicle model, and an ISAR image of complete echo are illustrated in Fig.~\ref{fig:samplingmode}. 

In this paper, the complete SS-ISAR-Net architecture consists of $K=12$ iterative modules (with justification for this parameter choice provided in Section~\ref{Sec:5-1}). Both convolutional layers $\mathcal{C}_1$ and $\mathcal{C}_2$ in the LFAT module use a kernel size of seven, and the impact of different kernel sizes on this module will be analyzed in Sec~\ref{Sec:5-2}. 
The EC regularization coefficients $\alpha$ in~\eqref{eq:loss1} and~\eqref{eq:loss2} are set as 1, which will be analyzed in Sec~\ref{Sec:5-3}. 
For comparison methods, since LADMM requires manually selected hyperparameters, it will change with the target type and signal SNR, and we will manually adjust it to achieve a good result. The parameter settings of SwinRL-ADMM are consistent with the original paper in~\cite{li2025tuning}. SFH-ADMM-Net has 10 iterative modules as described in~\cite{wang2022efficient}. Sup-ISAR-Net has the same parameters as SS-ISAR-Net, but the training strategy is different.
In addition, all the networks were trained for 200 epochs using the Adam optimizer. The initial learning rate was set to $1 \times 10^{-4}$ and halved every 50 epochs. All the methods were implemented on an NVIDIA A100 GPU using the PyTorch framework.

\begin{figure*}[t]
  \hspace{-0.3cm}
  \includegraphics[width =18cm, height =11.3cm]{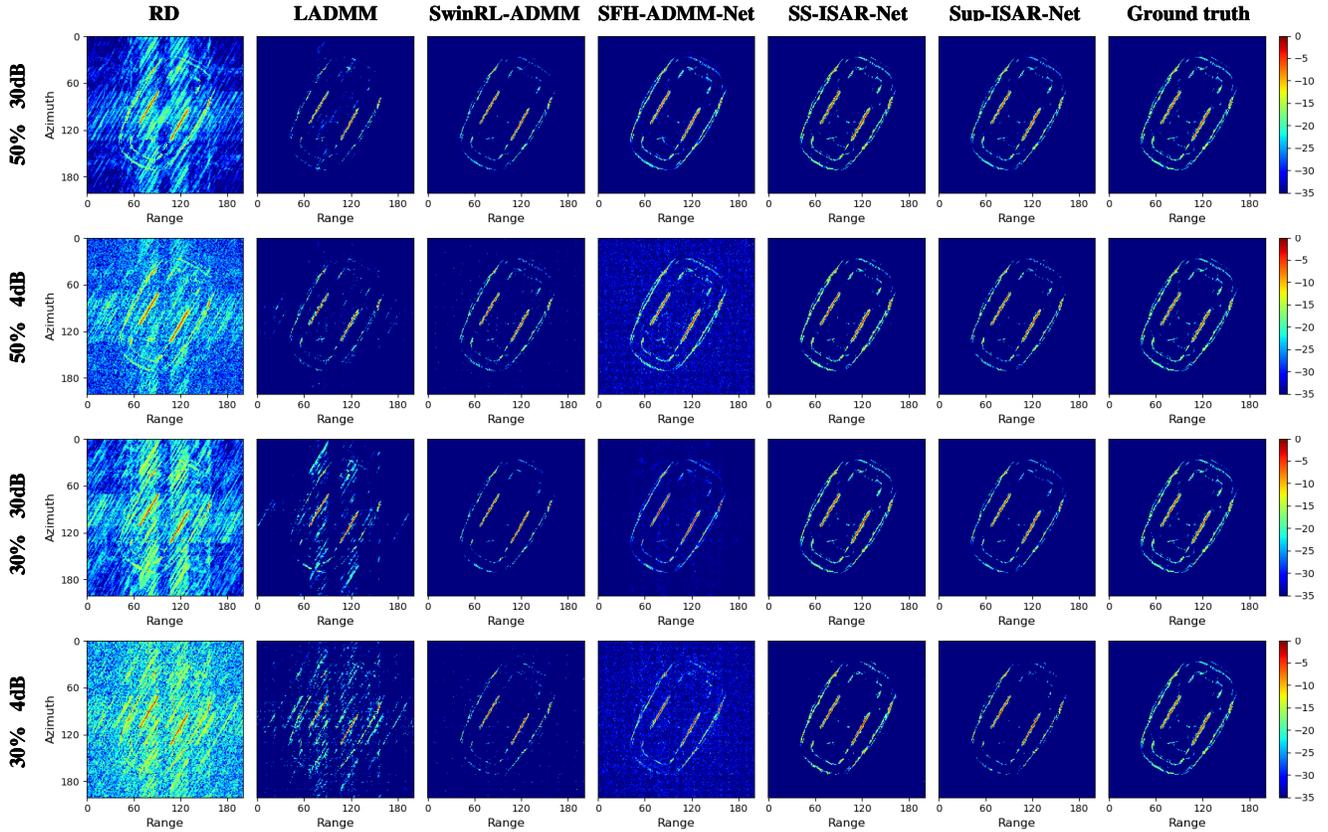}
  \caption{ISAR imaging of electromagnetic vehicle model with different sampling rate and noise level. From left to right, they are RD, LADMM, SwinRL-ADMM, SFH-ADMM-Net, the proposed SS-ISAR-Net, Sup-ISAR-Net and corresponding GT.}
\label{fig:result_1}
\end{figure*}

\subsection{Civilian vehicle electromagnetic testing}
\label{Sec:4-2}
In this section, we evaluate the performance of the proposed SS-ISAR-Net using the reserved test class from the GOTCHA dataset. To rigorously assess its performance, we conduct experiments under varying sampling and noise conditions: two different down-sampling ratios $\gamma=50\,\%$ and $\gamma=30\,\%$ and different noise levels were used: noise was added to the echo SNR to about $30\,\mathrm{dB}$ and $4\,\mathrm{dB}$ respectively. This was done to make the parameters inconsistent with the training dataset and further verify the generalization of the data sampling ratios. 

The reconstructed ISAR images obtained using different methods under various down-sampling rates and SNR are presented (Fig.~\ref{fig:result_1}). From top to bottom, the rows correspond to the ISAR reconstruction results under the following conditions: a 50$\,\%$ down-sampling rate with SNR levels of $30\,\mathrm{dB}$ and $4\,\mathrm{dB}$, followed by a 30$\,\%$ down-sampling rate with SNR levels of $30\,\mathrm{dB}$ and $4\,\mathrm{dB}$.

Fig.~\ref{fig:result_1} shows that the RD algorithm produces severe grating lobes, significantly degrading image quality. The LADMM, which relies on approximation techniques, fails under low sampling rates. While SwinRL-ADMM achieves effective reconstruction through reinforcement learning-based hyperparameter optimization, it does not recover fine structural details very well. SFH-ADMM-Net shows a noticeable performance decline under low sampling rates and high noise levels, indicating poor generalization capability.
By contrast, our proposed SS-ISAR-Net maintains robust performance across all operational conditions, matching the reconstruction quality of supervised Sup-ISAR-Net while demonstrating superior adaptability to extreme scenarios. 

The quantitative evaluation metrics, presented in~Table~\ref{Tab:table1}, further validate this observation. Notably, SS-ISAR-Net demonstrates consistent performance advantages over Sup-ISAR-Net. This superiority is attributed to two key factors: (1) Sup-ISAR-Net's performance is constrained by limited training data availability, which adversely affects its learned end-to-end mapping; (2) SS-ISAR-Net circumvents this limitation by employing equivariant constraints to extend the forward operator, thereby learning a more robust solution strategy rather than direct end-to-end mapping. As evidenced in Fig.~\ref{fig:loss_sup_zs}, while Sup-ISAR-Net achieves higher PSNR during training due to its supervised learning paradigm, SS-ISAR-Net exhibits superior testing performance. This performance difference demonstrates the enhanced generalization capability of our self-supervised learning approach.

\begin{figure}[t]
    \hspace{-0.3cm}
    \includegraphics[width =8.7cm, height =4.5cm]{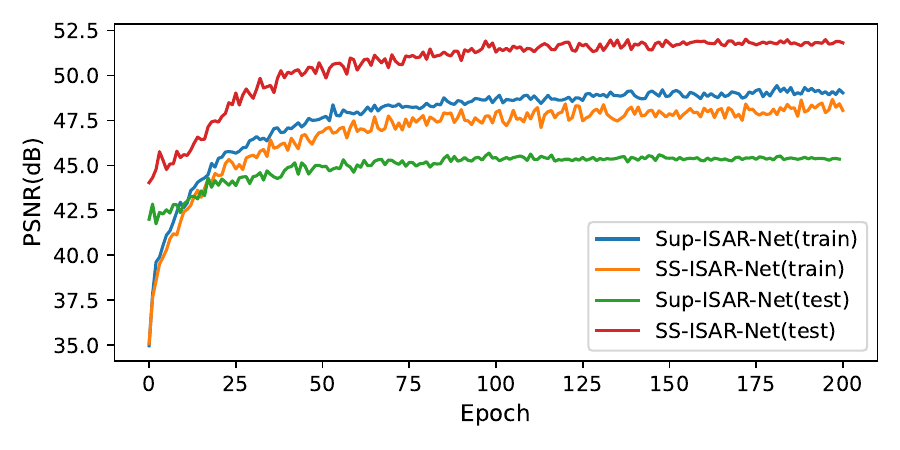}
    \vspace{-0.7cm}
    \caption{Reconstruction PSNRs of training epoch for the SS-ISAR-Net and Sup-ISAR-Net for training and testing.}
    \label{fig:loss_sup_zs}
\end{figure}

\begin{table}[!htbp]
  \centering
  \caption{Quantitative performance of civilian vehicle electromagnetic simulation with different down-sampling rate and SNR. \textcolor{red}{Red} and \textcolor{blue}{Blue} denote the best and second best quality of each evaluation method, respectively.}
  \resizebox{8.5cm}{4.6cm}{
  \begin{tabular}{cccccc}
    \toprule
   $\gamma$          & SNR           & Methods      & NMSE ($\downarrow $)   & PSNR ($\uparrow $)   & SSIM ($\uparrow $)    \\ \midrule
  \multirow{12}{*}{50$\,\%$}&\multirow{6}{*}{$30\,\mathrm{dB}$}  & RD     & 1.0462 & 26.1355 & 0.1785 \\ 
      &    & LADMM        & 0.1237 & 35.4072 & 0.8882 \\ 
      &    & SwinRL-ADMM   & 0.0444 & 39.8624 & 0.9620 \\ 
      &  & SFH-ADMM-Net & 0.0644 & 38.2442 & 0.8797  \\ 
      &    & Sup-ISAR-Net & \textcolor{blue}{0.0089}& \textcolor{blue}{46.8235}& \textcolor{blue}{0.9912}\\
      &    & SS-ISAR-Net  & \textcolor{red}{0.0023} &\textcolor{red}{52.7479}&\textcolor{red}{0.9977}\\ 
 \cline{2-6} \rule{0pt}{12pt}
      & \multirow{6}{*}{$4\,\mathrm{dB}$} & RD   &1.7694 & 23.8535 & 0.1168 \\ 
      &    & LADMM        & 0.1433 & 34.7686 & 0.8373  \\ 
      &    & SwinRL-ADMM   & 0.1025 & 36.2256 & 0.7879 \\ 
      &     & SFH-ADMM-Net &0.2775 & 31.8991 & 0.3089  \\
        &    & Sup-ISAR-Net &\textcolor{blue}{0.0553}&\textcolor{blue}{38.9054}&\textcolor{blue}{0.9591}\\
      &    & SS-ISAR-Net  &\textcolor{red}{0.0437}&\textcolor{red}{39.9283}&	\textcolor{red}{0.9661}\\ 
 \midrule
  \multirow{12}{*}{30$\,\%$} & \multirow{6}{*}{$30\,\mathrm{dB}$} & RD  & 2.3034 & 22.7081 & 0.1084             \\ 
      &    & LADMM        & 0.2839 & 31.7994 & 0.7303   \\ 
      &    & SwinRL-ADMM   & 0.0847 & 37.0550 & 0.9273\\ 
      &    & SFH-ADMM-Net & 0.2305 & 32.7059 & 0.6154\\
    &    & Sup-ISAR-Net &\textcolor{blue}{0.0388}&\textcolor{blue}{40.4431}&\textcolor{blue}{0.9617}\\
      &    & SS-ISAR-Net  &\textcolor{red}{0.0083}&\textcolor{red}{47.1412}&\textcolor{red}{0.9912}\\ 
       \cline{2-6} \rule{0pt}{12pt}
      & \multirow{6}{*}{$4\,\mathrm{dB}$}  & RD     & 3.9866 & 20.3258 & 0.0731                  \\ 
      &    & LADMM        &0.6118 & 28.4659 & 0.4494                \\
      &    & SwinRL-ADMM   &0.1686 & 34.0623 & 0.7940        \\
      &    & SFH-ADMM-Net &0.4246 & 30.0523 & 0.2878            \\ 
    &    & Sup-ISAR-Net &\textcolor{blue}{0.1196}&	\textcolor{blue}{35.5531}&\textcolor{blue}{0.9196}\\
      &    & SS-ISAR-Net  &\textcolor{red}{0.0825}&\textcolor{red}{37.1660}&\textcolor{red}{0.9405}\\ 
 \bottomrule
  \end{tabular}}
  \label{Tab:table1}
  \end{table}

\subsection{Experiments with Electromagnetic Satellite Model}\label{Sec:4-4}

\begin{figure}[t]
\centering
\vspace{-0.5cm}
\includegraphics[width =8.5cm, height =4.6cm]{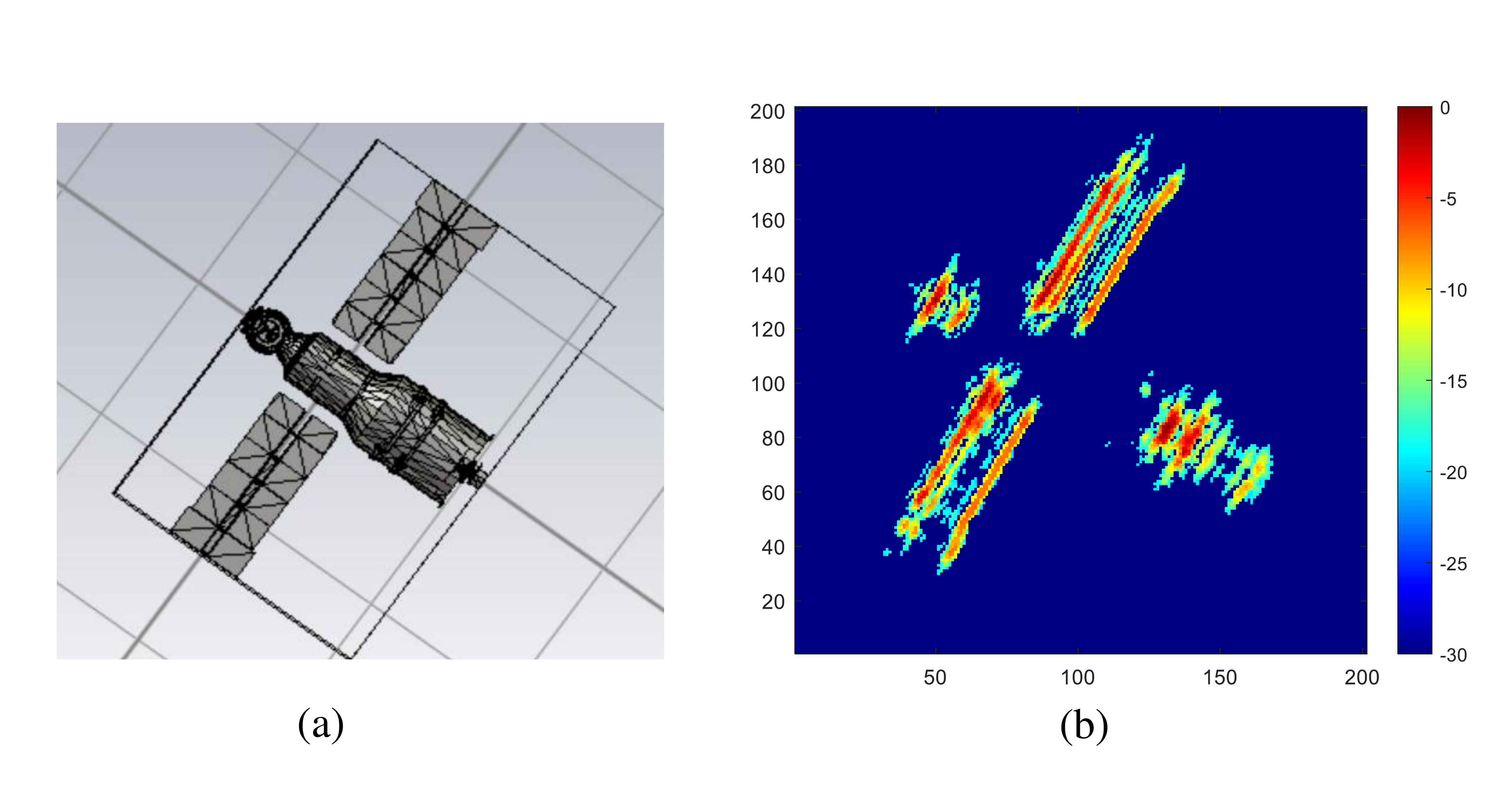}
\caption{(a) Electromagnetic satellite model.\,(b) ISAR image of its complete echo.}\label{fig:CST}
\end{figure}
  
\begin{figure*}[t]
\hspace{-0.3cm}
\includegraphics[width =18cm, height =6cm]{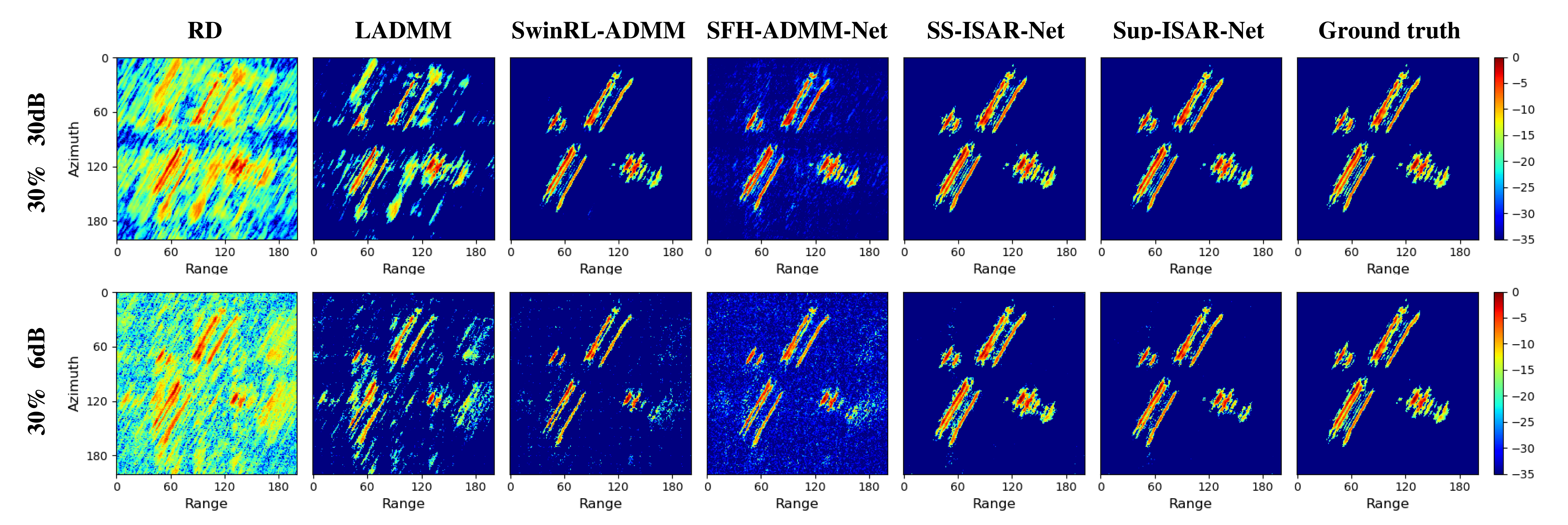}
\caption{ISAR imaging of electromagnetic satellite model with different noise level with 30$\,\%$ down-sampling rate. From left to right, they are RD, LADMM, SwinRL-ADMM, SFH-ADMM-Net, the proposed SS-ISAR-Net, Sup-ISAR-Net, and the corresponding ISAR image of complete echo.}\label{fig:result_2}
\end{figure*}

In this section, the effectiveness of the proposed SS-ISAR-Net is validated using electromagnetic simulation data from a satellite model in CST MICROWAVE STUDIO. The radar operates with a bandwidth of $4\,\mathrm{GHz}$, an azimuth synthesis angle of $4^\circ$, and a carrier frequency of $14\,\mathrm{GHz}$. The electromagnetic satellite model and its ISAR image obtained from complete echoes are illustrated in Fig.~\ref{fig:CST}. The complete echo is downsampled to 30$\,\%$, and complex Gaussian noise was added to synthesize two sets of radar echoes with the SNR of $30\,\mathrm{dB}$ and $6\,\mathrm{dB}$, respectively.

Fig.~\ref{fig:result_2} presents a comprehensive comparison of ISAR reconstruction performance across different methods. The RD algorithm suffers from prominent grating lobes caused by severe down-sampling artifacts. While LADMM incorporates iterative optimization techniques, it still fails to effectively suppress these artifacts. SwinRL-ADMM, by leveraging self-learning hyperparameters, successfully mitigates grating lobes but at the cost of losing finer structural details. Although SFH-ADMM-Net demonstrates superior performance compared to conventional approaches, the persistent artifacts in its reconstructions reveal fundamental limitations in its generalization capacity.

In marked contrast, both the proposed SS-ISAR-Net and supervised Sup-ISAR-Net demonstrate robust suppression of grating lobes in various noise conditions while maintaining exceptional structural fidelity. Notably, SS-ISAR-Net attains supervised-comparable generalization performance without requiring ISAR ground truth images during training. This breakthrough is made possible by two key factors: (1) a geometrically constrained equivariant prior that preserves target characteristics, and (2) the N2N self-supervised denoising framework that ensures reconstruction stability under noise perturbation.

Table~\ref{Tab:table2} provides a quantitative performance comparison between different methods. The proposed SS-ISAR-Net consistently achieves a lower NMSE and higher PSNR and SSIM, further demonstrating its effectiveness and robustness.

 \begin{table}[!htbp]
    \centering
    \caption{Quantitative performance of Electromagnetic Satellite Model with different SNR. \textcolor{red}{Red} and \textcolor{blue}{blue} denote the best and second best quality of each evaluation method, respectively.}
    \resizebox{8.5cm}{2.45cm}{
    \begin{tabular}{cccccc}
      \toprule
      $\gamma$          & SNR           & Methods      & NMSE ($\downarrow $)    & PSNR ($\uparrow $)   & SSIM ($\uparrow $)    \\ \midrule
    \multirow{12}{*}{30$\,\%$}&\multirow{6}{*}{$30\,\mathrm{dB}$}  & RD     & 1.8096 & 17.4470 & 0.1146\\ 
        &    & LADMM        & 0.3598 & 24.4617 & 0.6058 \\ 
        &    & SwinRL-ADMM   & 0.0280 & 35.5528 & 0.9730 \\ 
        &  & SFH-ADMM-Net & 0.1152 & 29.4070 & 0.4682 \\
     &    & Sup-ISAR-Net & \textcolor{blue}{0.0133}&\textcolor{blue}{38.7947}&\textcolor{blue}{0.9835} \\
        &    & SS-ISAR-Net  & \textcolor{red}{0.0045}&	\textcolor{red}{43.4626}&\textcolor{red}{0.9962}\\ 
 \cline{2-6} \rule{0pt}{12pt}
        & \multirow{6}{*}{$6\,\mathrm{dB}$} & RD    & 1.9226 & 17.1839 & 0.0927\\ 
        &    & LADMM        & 0.3997 & 24.0059 & 0.4130 \\ 
        &    & SwinRL-ADMM   & 0.2059 & 26.8862 & 0.5895 \\ 
        &  & SFH-ADMM-Net & 0.2515 & 26.0180 & 0.2386 \\ 
        &    & Sup-ISAR-Net & \textcolor{blue}{0.0740}&	\textcolor{blue}{31.3296}&	\textcolor{blue}{0.9356}         \\
        &    & SS-ISAR-Net  & \textcolor{red}{0.0545}&	\textcolor{red}{32.6555}&	\textcolor{red}{0.9531}\\  \bottomrule
    \end{tabular}}
    \label{Tab:table2}
\end{table}

\subsection{Experiments with Measurement Data}

\begin{figure*}[!htbp]
  \subfloat[]{\includegraphics[width =8.5cm, height =3.6cm]{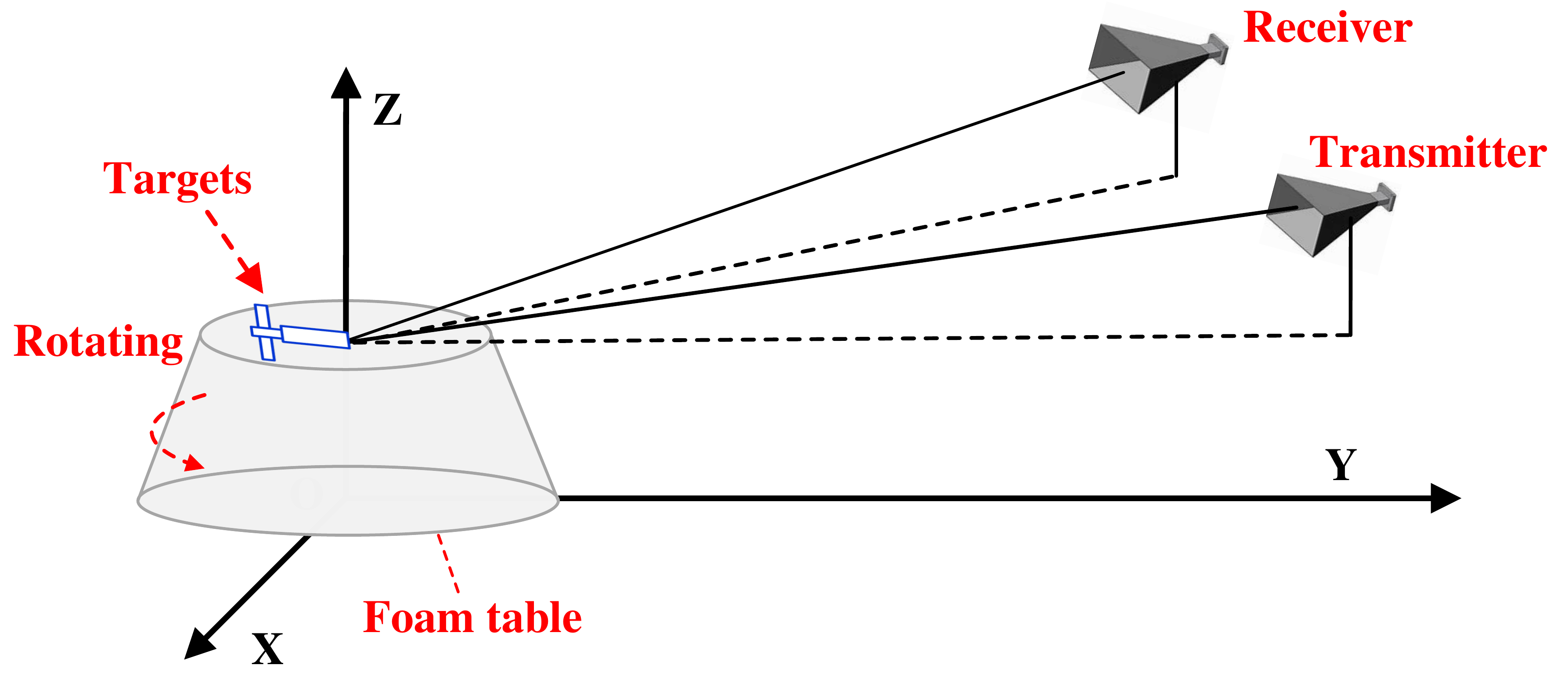}\label{fig:anshi1}}
  \hspace{0.3cm}
  \subfloat[]{\includegraphics[width =8.5cm, height =3.3cm]{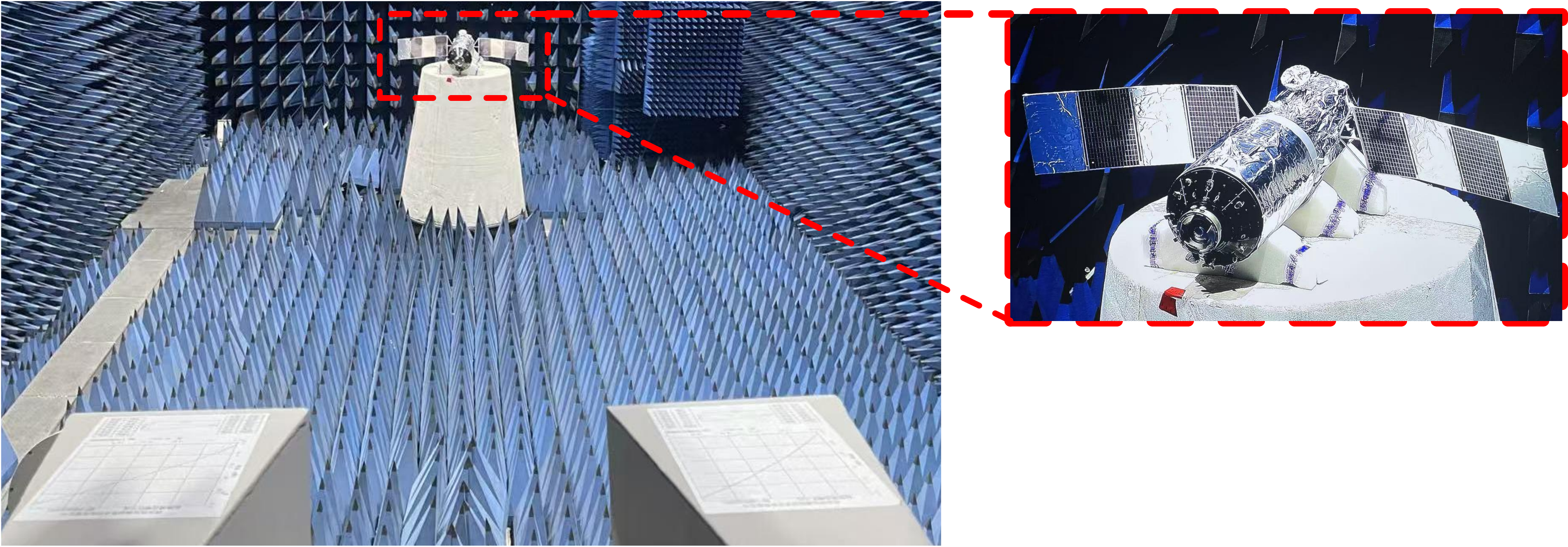}\label{fig:anshi2}}
  \caption{\protect\subref{fig:anshi1} Radar observation geometry of the anechoic chamber experiment.
  \protect\subref{fig:anshi2} Real-world images of the scenarios of the anechoic chamber experiment.}\label{fig:anshi}
\end{figure*}

  \begin{figure*}[!htbp]
    \centering
  \includegraphics[width =17.5cm, height =5.8cm]{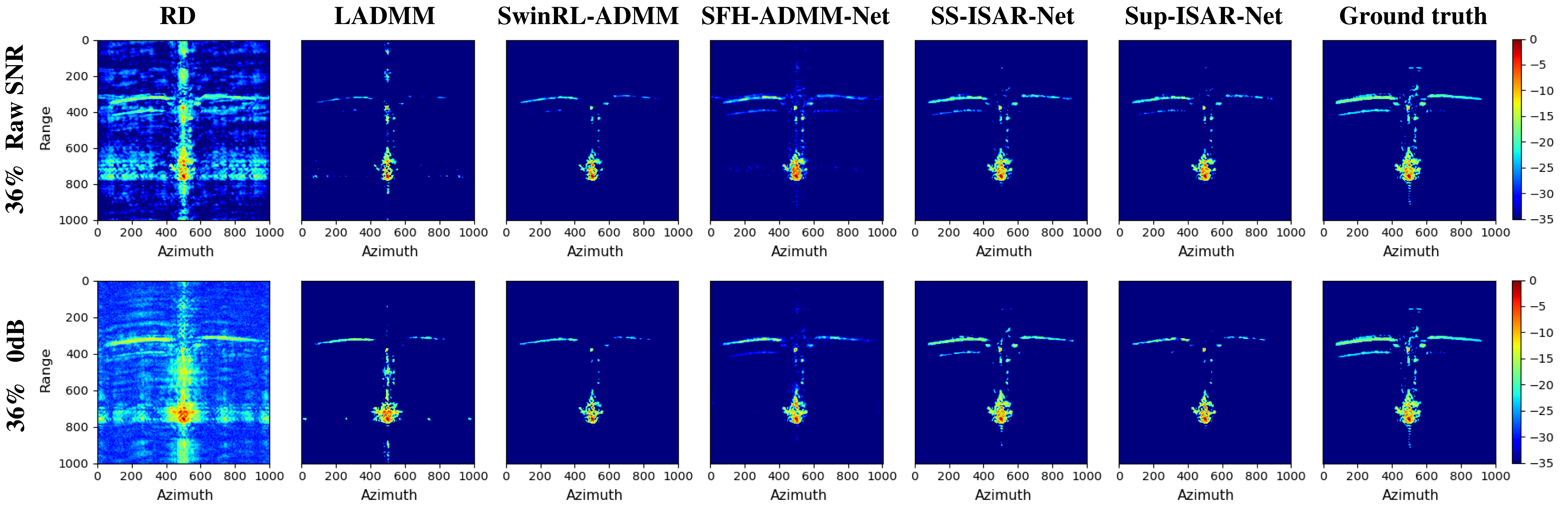}
    \caption{ISAR imaging of measurement data of a satellite model within a microwave anechoic chamber with different noise level with 36$\,\%$ down-sampling rate. From left to right, they are RD method, LADMM, SwinRL-ADMM, SFH-ADMM-Net, the proposed SS-ISAR-Net, Sup-ISAR-Net, and the corresponding ISAR image of complete echo.}
    \label{fig:result_3}
  \end{figure*} 

This section presents experimental validation of the proposed method using measurement data of a satellite model collected in a microwave anechoic chamber. The radar operates at a carrier frequency of $15\,\mathrm{GHz}$, with a bandwidth of $6\,\mathrm{GHz}$ and an azimuth synthesis angle of $20^\circ$. Figures~\ref{fig:anshi}\subref{fig:anshi1} and \ref{fig:anshi}\subref{fig:anshi2} respectively depict the experimental observation setup and corresponding target images. Due to the high SNR of the anechoic chamber echoes, additional noise is added to synthesize the echo with the SNR of $0\,\mathrm{dB}$ in order to evaluate the performance of the proposed method under noisy conditions. The complete echo data are downsampled to 36$\,\%$.

The imaging results for the echoes with the raw SNR and $0\,\mathrm{dB}$ are presented in Fig.~\ref{fig:result_3}. It can be seen that the RD method suffers the most significant degradation. LADMM partially suppresses these artifacts, but does not accurately reconstruct the geometric structure of the target. SwinRL-ADMM effectively reconstructs the basic target structure under both noise conditions, yet it loses finer details, such as partial solar panels. SFH-ADMM-Net exhibits poor performance under strong noise conditions, with some residual grating lobes still remaining. In contrast, SS-ISAR-Net achieves a more accurate overall reconstruction of the satellite model even under strong noise conditions. Table~\ref{Tab:table3} presents a quantitative performance comparison of different methods in the measurement data experiment. The proposed SS-ISAR-Net consistently produces superior ISAR images, with lower NMSE and higher PSNR and SSIM, which demonstrates its effectiveness and superiority, as it is trained solely on incomplete echoes, showcasing its robustness in real-world scenarios.

\begin{table}[t]
  \centering
  \caption{Quantitative performance of Measurement Data in microwave anechoic chamber with different SNR. \textcolor{red}{Red} and \textcolor{blue}{blue} denote the best and second best quality of each evaluation method, respectively.}
  \resizebox{8.5cm}{2.45cm}{
  \begin{tabular}{cccccc}
    \toprule
   $\gamma$          & SNR           & Methods      & NMSE ($\downarrow $)    & PSNR ($\uparrow $)   & SSIM ($\uparrow $)    \\ \midrule
  \multirow{12}{*}{36$\,\%$}&\multirow{6}{*}{Raw SNR}  & RD  & 1.4084 & 27.9675 & 0.2257 \\ 
      &    & LADMM        &0.2981 & 34.7114 & 0.9339  \\ 
      &    & SwinRL-ADMM   &0.2007 & 36.4302 & 0.9518 \\ 
      &  & SFH-ADMM-Net &0.2546 & 35.3966 & 0.6235  \\ 
      &    & Sup-ISAR-Net & \textcolor{blue}{0.1246}&\textcolor{blue}{38.4981}&\textcolor{blue}{0.9621} \\
      &    & SS-ISAR-Net &\textcolor{red}{0.0802}&\textcolor{red}{40.4119}&\textcolor{red}{0.9739}\\ 
     \cline{2-6} \rule{0pt}{12pt}
      & \multirow{6}{*}{$0\,\mathrm{dB}$} & RD  &  3.5506 & 23.9519 & 0.0602\\ 
      &    & LADMM        & 0.3257 & 34.3262 & 0.8867 \\ 
      &    & SwinRL-ADMM   &0.1884 & 36.7035 & 0.9288 \\ 
      &  & SFH-ADMM-Net &0.3953 & 33.4854 & 0.3469\\
      &    & Sup-ISAR-Net & \textcolor{blue}{0.1683}&	\textcolor{blue}{37.1952}&	\textcolor{blue}{0.9503}  \\ 
      &    & SS-ISAR-Net  &\textcolor{red}{0.0839}&\textcolor{red}{40.2161}&	\textcolor{red}{0.9677}\\ 
       \bottomrule
  \end{tabular}}\label{Tab:table3}
  \end{table}

\section{Discussion}
\label{Sec:5}
This section investigates the impact of key hyperparameters on network performance, including the number of iterations of the SS-ISAR-Net and the equivariant constraint (EC) regularization coefficient. Furthermore, an ablation study is conducted to assess the role of the LFAT module in the learning process.

\subsection{Iterative number $K$ of the SS-ISAR-Net}\label{Sec:5-1}
The number of iterative modules in the SS-ISAR-Net plays a crucial role in determining the sparse ISAR imaging performance. A trade-off between computational efficiency and reconstruction accuracy should be made to ensure both high-quality imaging and practical feasibility. To investigate this, we systematically evaluate the impact of varying the number of network iteration layers, ranging from 2 to 14. These configurations are also employed for self-supervised training and subsequently tested on the dataset described in Sec.~\ref{Sec:4-2} with a down-sampling rate of 40$\,\%$. The corresponding results are presented in Fig.~\ref{fig:layer}. As illustrated in the figure, when the number of layers reaches $K \geq 12$, the performance gain becomes marginal. Therefore, $K=12$ is selected as the optimal configuration in this paper.

\begin{figure}[t]
  \hspace{-0.2cm}
  \includegraphics[width =8.5cm, height =6.2cm]{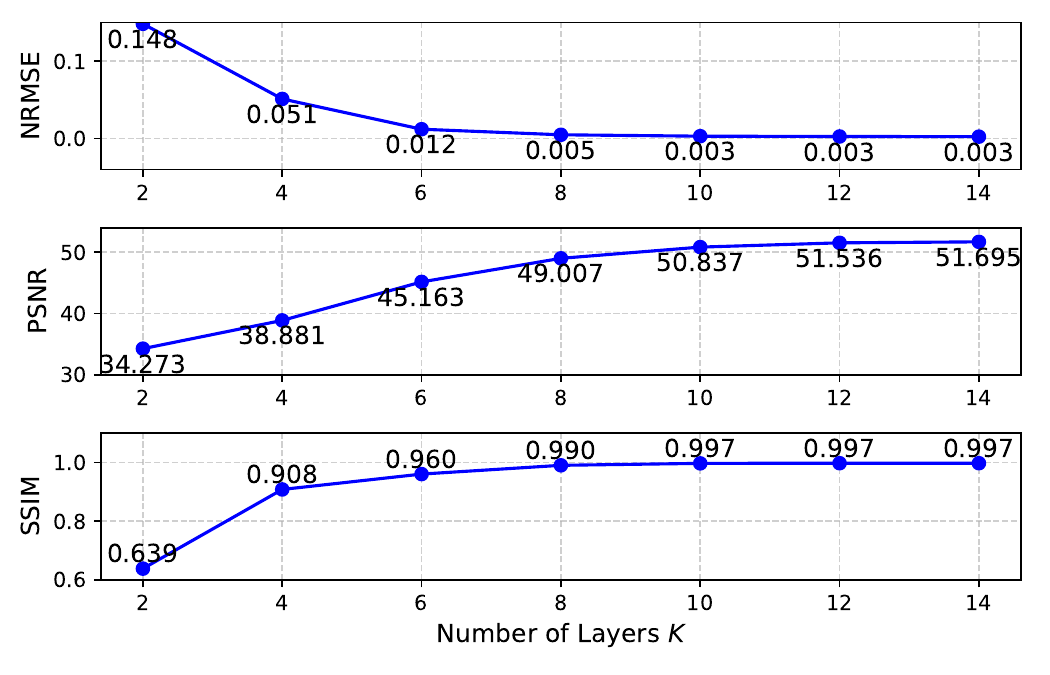}
  \caption{Effect of the iterative number $K$ of SS-ISAR-Net on the reconstruction performance, from top to bottom are NMSE, PSNR, and SSIM.}\label{fig:layer}
\end{figure}

\subsection{Kernel size of convolution layers in LFAT module}
\label{Sec:5-2}

An ablation study is conducted to analyze the impact of the kernel size of convolution layers in the proposed LFAT module on reconstruction performance.
Under the condition of a fixed number of convolutional layers, different kernel sizes influence the receptive field, thereby affecting the range of structural features' influence on threshold selection. A suitable size of convolution kernels can effectively learn the local features of the target. The experimental results are summarized in Table~\ref{table:Kernel size}. Note that a kernel size of 0 indicates the use of a fixed learning threshold across the entire image in~\eqref{eq:soft}. As shown in the results, the incorporation of the adaptive threshold module significantly enhances the accuracy of the reconstruction. A steady improvement is observed as the size of the kernels increases. However, beyond a kernel size of seven, the performance gain becomes negligible. At this point, the receptive field reaches $13\times13$, which is sufficient to capture key structural information. Therefore, a kernel size of seven is selected for the adaptive threshold module in the paper.

\begin{table}[t]
\centering
  \caption{Effect of the kernel size of convolution in LFAT module on the reconstruction performance (NMSE, PSNR, and SSIM)}   
  \resizebox{6.5cm}{1.3cm}{
  \begin{tabular}{cccc}
  \toprule
    kernel size & NMSE ($\downarrow $)    & PSNR ($\uparrow $)   & SSIM ($\uparrow $)   \\
    \midrule
    0 & 0.0276 &    43.4995 &   0.9664\\
    3 & 0.0152 &	47.0371 &	0.9775\\
    5 & 0.0034 &    48.5873 &   0.9966\\
    7 & 0.0027 &	50.1504 &	0.9970\\
    9 & 0.0027 &	50.0941 &	0.9973\\
  \bottomrule
  \end{tabular}}\label{table:Kernel size}
\end{table}

\subsection{EC regularization coefficient $\alpha$}\label{Sec:5-3}

For the proposed SS-ISAR-Net, which incorporates the equivariant constraint, selecting an appropriate regularization coefficient is crucial to ensure high reconstruction quality. In this study, we systematically evaluated the impact of different regularization coefficients, specifically 0.01, 0.1, 1, 10, 100, and 1000 on network performance. The SS-ISAR-Net is trained using these coefficients and subsequently tested on the dataset described in Sec.~\ref{Sec:4-2} with a down-sampling rate of 40$\,\%$. The quantitative results, summarized in Table~\ref{table:EC regularization} demonstrate that the regularization coefficient significantly influences test performance. Specifically, values within the range $\alpha \in [1,100]$ produce the best results, with $\alpha = 1$ selected in this paper.

\begin{table}[t]
\centering
  \caption{Effect of the EC regularization coefficient $\alpha$ on the reconstruction performance (NMSE, PSNR, and SSIM)}  
  \resizebox{6.5cm}{1.4cm}{
  \begin{tabular}{cccc}
  \toprule
    Coefficient $\alpha$ & NMSE ($\downarrow $)    & PSNR ($\uparrow $)   & SSIM ($\uparrow $)   \\
    \midrule
    0.01 & 0.4725  & 29.1548 & 0.2478\\
    0.1 &0.0152&	44.8870 &	0.9775	\\
    1 & 0.0034	&50.8274&0.9966\\
    10 &0.0041&	50.1504&	0.9963	 \\
    100&0.0041&	50.1011&	0.9960	 \\
    1000&0.0053&	48.5308&	0.9890	 \\
  \bottomrule
  \end{tabular}}\label{table:EC regularization}
\end{table}

\section{Conclusion}\label{Sec:6}
To reduce the need for paired ISAR images and echo, this article proposes a self-supervised learning-based unfolded ISAR imaging network (SS-ISAR-Net). By integrating an ADMM-based unfolded ISAR-Net with an equivariant constraint, the proposed method enables end-to-end training using only sparse radar echo data. Additionally, an N2N-based self-denoising module effectively mitigates the impact of noise, further enhancing the robustness of the proposed SS-ISAR-Net. Numerical simulations and real-data experiments demonstrate that the proposed SS-ISAR-Net obtains better ISAR images compared to other state-of-the-art methods. 

Finally, although two-dimensional (2D) sparse ISAR imaging was considered, the same idea used in SS-ISAR-Net could be, in principle, readily extended to 3D scenarios, such as distributed 3D radar, and would be conducted in the future.

\appendices

\section{Proof of the irrelevance of ${{\mathbf{N}}_{n1}}$ and ${{\mathbf{N}}_{n2}}$}\label{proof1}
According to~\eqref{Noise}, calculate the relevant covariance for ${{\mathbf{N}}_{n1}}$ and ${{\mathbf{N}}_{n2}}$
\begin{align}  
\begin{gathered}
  \sum  = \left[ {\begin{array}{*{20}{c}}
  {{{\mathbf{N}}_{n1}}} \\ 
  {{{\mathbf{N}}_{n2}}} 
\end{array}} \right]\left[ {{{\mathbf{N}}_{n1}}{\text{ }}{{\mathbf{N}}_{n2}}} \right] \hfill \\
   = \left[ {\begin{array}{*{20}{c}}
  {{\mathbf{N}} + {{\mathbf{N}}_1}} \\ 
  {{\mathbf{N}} - {{\mathbf{N}}_1}} 
\end{array}} \right]\left[ {{\mathbf{N}} + {{\mathbf{N}}_1}{\text{ }}{\mathbf{N}} + {{\mathbf{N}}_1}} \right] \hfill \\
   = \left[ \begin{gathered}
  \operatorname{cov} \left( {{\mathbf{N}} + {{\mathbf{N}}_1},{\mathbf{N}} + {{\mathbf{N}}_1}} \right){\text{ }}\operatorname{cov} \left( {{\mathbf{N}} + {{\mathbf{N}}_1},{\mathbf{N}} - {{\mathbf{N}}_1}} \right) \hfill \\
  \operatorname{cov} \left( {{\mathbf{N}} - {{\mathbf{N}}_1},{\mathbf{N}} + {{\mathbf{N}}_1}} \right){\text{ }}\operatorname{cov} \left( {{\mathbf{N}} - {{\mathbf{N}}_1},{\mathbf{N}} - {{\mathbf{N}}_1}} \right){\text{ }} \hfill \\ 
\end{gathered}  \right] \hfill \\
   = \left[ {\begin{array}{*{20}{c}}
  {2{\sigma ^2}{\mathbf{I}}}&{\mathbf{0}} \\ 
  {\mathbf{0}}&{2{\sigma ^2}{\mathbf{I}}} 
\end{array}} \right] \hfill \\ 
\end{gathered} 
\label{eq:irrelevance}
\end{align} 
where $\operatorname{cov}\left( { \cdot , \cdot } \right)$ represents the matrix-related operation. Since noises ${\mathbf{N}} \sim \mathcal{CN}\left( {0,{\sigma ^2}I} \right)$ and ${\mathbf{N}}_1 \sim \mathcal{CN}\left( {0,{\sigma ^2}I} \right)$ are uncorrelated, one can know $\operatorname{cov}\left( {{\bf{N}},{{\bf{N}}}} \right) = {\sigma ^2}{\bf{I}}$, $\operatorname{cov}\left( {{\bf{N}}_1,{{\bf{N}}_1}} \right) = {\sigma ^2}{\bf{I}}$, and $\operatorname{cov}\left( {{\bf{N}},{{\bf{N}}_1}} \right) ={\bf{0}}$. 

Building upon the above proof, we know that the two noises ${{\mathbf{N}}_{n1}}$ and ${{\mathbf{N}}_{n2}}$ are uncorrelated, which is the fundamental requirement for the N2N framework's validity in our self-denoising module.
\section{Proofs of loss equivalence}\label{proof2}

Following~\eqref{eq:derivation1}, using the irrelevance of ${{\mathbf{N}}_{n1}}$ and ${{\mathbf{N}}_{n2}}$ in \eqref{eq:irrelevance}, one can know
\begin{equation}
  \begin{gathered}
  {\mathbb{E}_{{{\mathbf{Y}}_{n\_1}},{{\mathbf{Y}}_{n\_2}}}}\left\{ {\left\| {{f_d}\left( {{{\mathbf{Y}}_{n\_1}}} \right) - {{\mathbf{Y}}_{n\_2}}} \right\|_F^2} \right\} \hfill \\
   = {\mathbb{E}_{{{\mathbf{N}}_{n1}},{{\mathbf{N}}_{n2}}}}\left\{ {\left\| {{f_d}\left( {{\mathbf{Y}} + {{\mathbf{N}}_{n1}}} \right) - {\mathbf{Y}} - {{\mathbf{N}}_{n2}}} \right\|_F^2} \right\} \hfill \\
   = {\mathbb{E}_{{{\mathbf{N}}_{n1}},{{\mathbf{N}}_{n2}}}}\left\{ \begin{gathered}
  \left\| {{f_d}\left( {{\mathbf{Y}} + {{\mathbf{N}}_{n1}}} \right) - {\mathbf{Y}}} \right\|_F^2 \hfill \\
   + 2{\mathbf{N}}_{n2}^H\left( {{f_d}\left( {{\mathbf{Y}} + {{\mathbf{N}}_{n1}}} \right) - {\mathbf{Y}}} \right) + \left\| {{{\mathbf{N}}_{n2}}} \right\|_F^2 \hfill \\ 
\end{gathered}  \right\} \hfill \\
   = {\mathbb{E}_{{{\mathbf{N}}_{n1}},{{\mathbf{N}}_{n2}}}}\left\{ {\left\| {{f_d}\left( {{\mathbf{Y}} + {{\mathbf{N}}_{n1}}} \right) - {\mathbf{Y}}} \right\|_F^2} \right\} + const \hfill \\ 
\end{gathered} 
\end{equation}
where the third step of the derivation uses the condition that ${{\mathbf{N}}_{n2}}$ is independent of ${\mathbf{Y}}$ and ${{\mathbf{N}}_{n1}}$, and $\left\| {{{\mathbf{N}}_{n2}}} \right\|_F^2$ is a constant. Along the same lines, we know that
\begin{equation}
  \begin{gathered}
  {\mathbb{E}_{{{\mathbf{Y}}_{n\_1}},{{\mathbf{Y}}_{n\_2}}}}\left\{ {\left\| {{\mathbf{A}}{f_u}\left( {{f_d}\left( {{{\mathbf{Y}}_{n\_1}}} \right)} \right) - {{\mathbf{Y}}_{n\_2}}} \right\|_F^2} \right\} \hfill \\
   = {\mathbb{E}_{{{\mathbf{N}}_{n1}},{{\mathbf{N}}_{n2}}}}\left\{ {\left\| {{\mathbf{A}}{f_u}\left( {{f_d}\left( {{\mathbf{Y}} + {{\mathbf{N}}_{n1}}} \right)} \right) - {\mathbf{Y}} - {{\mathbf{N}}_{n2}}} \right\|_F^2} \right\} \hfill \\
   = {\mathbb{E}_{{{\mathbf{N}}_{n1}},{{\mathbf{N}}_{n2}}}}\left\{ \begin{gathered}
  \left\| {{\mathbf{A}}{f_u}\left( {{f_d}\left( {{\mathbf{Y}} + {{\mathbf{N}}_{n1}}} \right)} \right) - {\mathbf{Y}}} \right\|_F^2+ \left\| {{{\mathbf{N}}_{n2}}} \right\|_F^2 \hfill \\
   + 2{\mathbf{N}}_{n2}^H\left( {{\mathbf{A}}{f_u}\left( {{f_d}\left( {{\mathbf{Y}} + {{\mathbf{N}}_{n1}}} \right)} \right) - {\mathbf{Y}}} \right)  \hfill \\ 
\end{gathered}  \right\} \hfill \\
   = {\mathbb{E}_{{{\mathbf{N}}_{n1}},{{\mathbf{N}}_{n2}}}}\left\{ {\left\| {{\mathbf{A}}\left( {{f_u}\left( {{f_d}\left( {{\mathbf{Y}} + {{\mathbf{N}}_{n1}}} \right)} \right) - {\mathbf{X}}} \right)} \right\|_F^2} \right\} + const \hfill \\ 
\end{gathered}
\end{equation}
This can also be equated to loss training for the noiseless radar echo $\mathbf{Y}$.

\ifCLASSOPTIONcaptionsoff
  \newpage
\fi

\bibliography{ref}
\bibliographystyle{IEEEtran}

\begin{IEEEbiography}[{\includegraphics[width=1in,height=1.25in,clip,keepaspectratio]{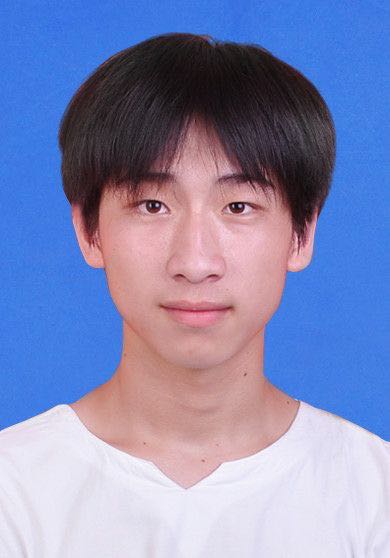}}]
{Ziwen Wang}{\space} was born in Jiangxi, China, in 1999. He received the B.Sc. degree in communication engineering from the Northwestern Polytechnical University, Xi’an, China, in 2020, and the M.Sc. degree in Information and Communication Engineering from the Beijing Institute of Technology, Beijing, in 2023. He is currently pursuing the Ph.D. degree with the Beijing Institute of Technology, Beijing.

His research interests mainly include radar signal processing, sparse aperture ISAR imaging, and deep learning.
\end{IEEEbiography}%

\begin{IEEEbiography}
[{\includegraphics[width=1in,height=1.25in,clip,keepaspectratio]{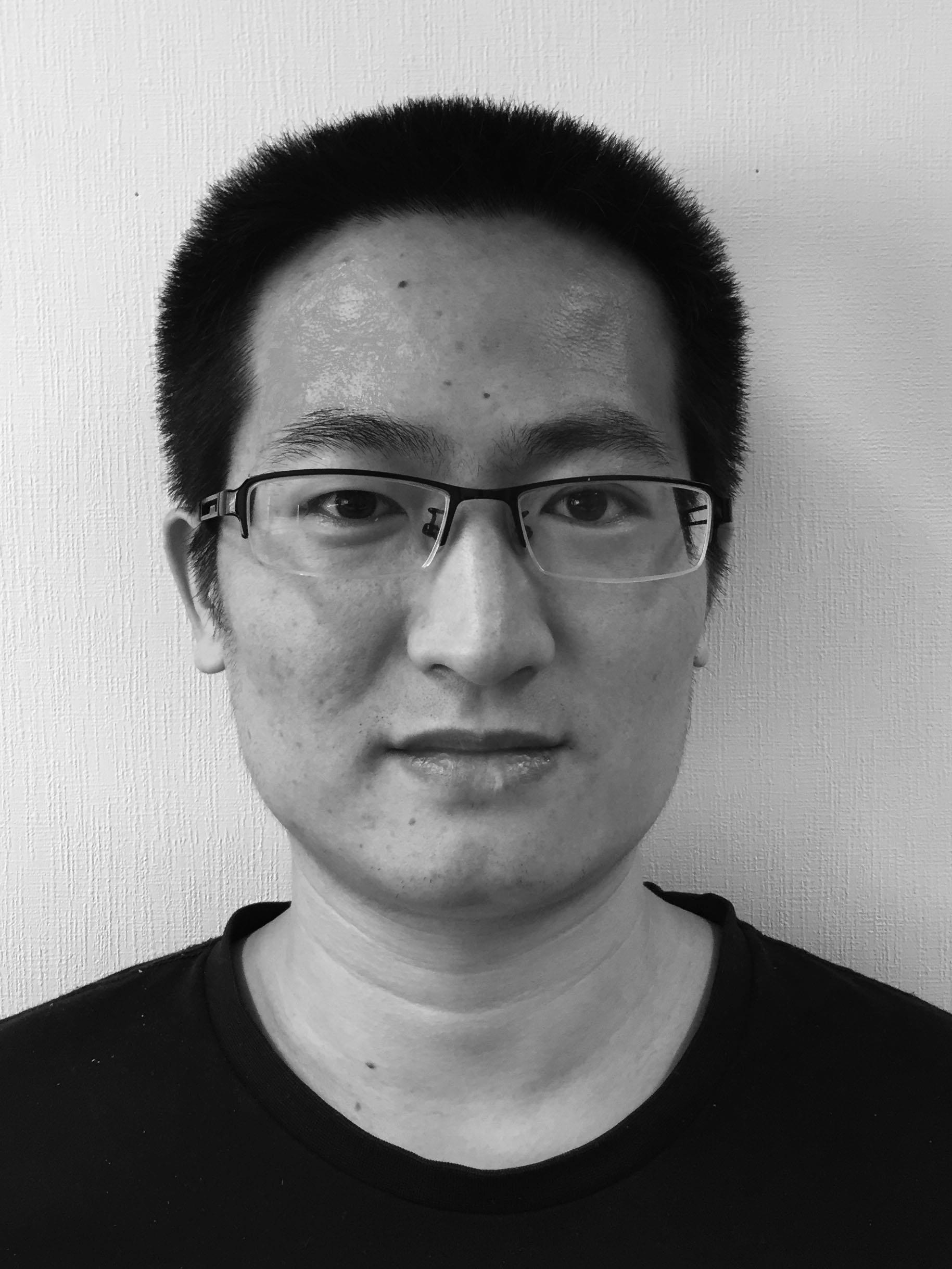}}]
{Jianping Wang} (Member, IEEE) received the Ph.D. degree in electrical engineering from the Delft University of Technology, Delft, The Netherlands, in 2018. From August 2012 to April 2013, he was a Research Associate at the University of New South Wales, Sydney, NSW, Australia, with a focus on frequency-modulated continuous wave synthetic aperture radar signal processing for formation flying satellites. He was a Post-Doctoral Researcher and a Guest Researcher with the Group of Microwave Sensing, Signals and Systems (MS3), Delft University of Technology, from 2018 to 2024. Since 2024, he has been with the School of Information and Electronics, Beijing Institute of Technology, Beijing, China. 

His research interests include microwave imaging, signal processing, and antenna array design. Dr. Wang was a TPC Member of the IET International Radar Conference, Nanjing, China, in 2018 and 2023. He was a Finalist for the Best Student Paper Award in the International Workshop on Advanced Ground Penetrating Radar (IWAGPR), Edinburgh, U.K., in 2017, and the International Conference on Radar, Brisbane, Australia, in 2018. He has served as a reviewer of many IEEE journals
\end{IEEEbiography}

\begin{IEEEbiography}
[{\includegraphics[width=1in,height=1.25in,clip,keepaspectratio]{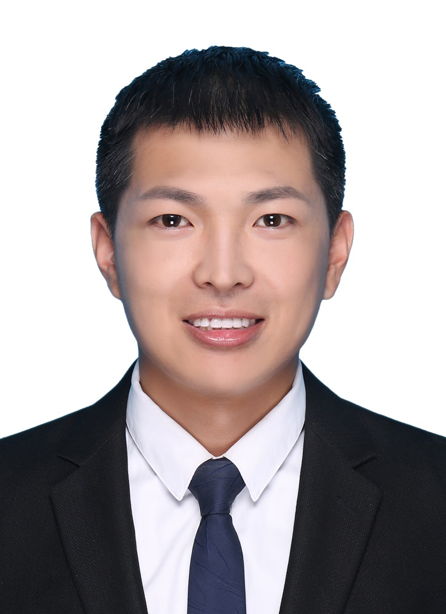}}]
{Pucheng Li} (Member, IEEE) received the B.Eng. and M.Eng. degrees in electronic engineering from the Civil Aviation University of China, Tianjin, China, in 2017 and 2020, respectively, and the Ph.D. degree from the Beijing Institute of Technology (BIT), Beijing, China, in 2024. 

He is currently a Postdoctoral Fellow with the Radar Technology Research Institute at BIT. His research focuses on high-resolution radar imaging and machine learning.
\end{IEEEbiography}

\begin{IEEEbiography}
[{\includegraphics[width=1in,height=1.25in,clip,keepaspectratio]{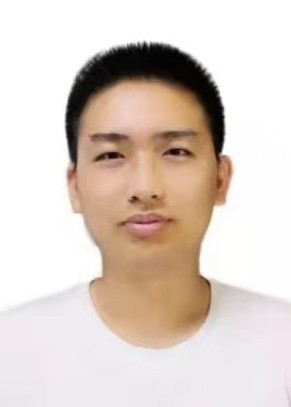}}]
{Yifan Wu}{\space} was born in Henan, China, in 1999. He received his B.Sc. degree in remote sensing science and technology from Xidian University, Xi'an, China, in 2021. He is currently pursuing the Ph.D. degree with the Beijing Institute of Technology, Beijing. 

His research interests mainly include radar signal processing, radar target recognition, and deep learning.
\end{IEEEbiography}

\begin{IEEEbiography}
[{\includegraphics[width=1in,height=1.25in,clip,keepaspectratio]{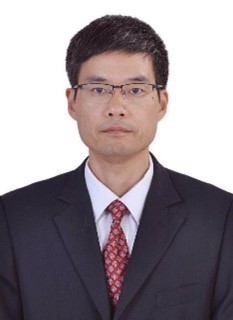}}]
{Zegang Ding} (Senior Member, IEEE) received the Ph.D. degree from the Beijing Institute of Technology (BIT), Beijing, China, in 2008. In 2008, he was a Visiting Scholar with the Communication Group, School of Electronic and Electrical Engineering, University of Birmingham, Birmingham, U.K. He is currently a part of the Teaching Staff at the Department of Electronic Engineering, BIT. 

His research interests include synthetic aperture radar imaging and system design. 
\end{IEEEbiography}
\end{document}